\newtheorem{theorem}{Theorem}
\newtheorem{lemma}{Lemma}
\newtheorem{prop}{Proposition}
\newtheorem{definition}{Definition}
\author{
   Eric Angel \\
   IBISC, Universit\'e d'Evry  \\
   Eric.Angel@ibisc.univ-evry.fr\\
   \and
   Evripidis Bampis\\
   LIP6, Universit\'e Pierre et Marie Curie\\
   Evripidis.Bampis@lip6.fr\\
   \and
   Vincent Chau\\
   IBISC, Universit\'e d'Evry\\
   Vincent.Chau@ibisc.univ-evry.fr
}
\date{}
\title{Throughput Maximization in the Speed-Scaling Setting\thanks{Submitted to SODA 2014}}
\begin{document}
\maketitle

\maketitle
\begin{abstract}
We are given a set of $n$ jobs and a single processor that can vary its speed dynamically.
Each job $J_j$ is characterized by its processing requirement (work) $p_j$, its release date $r_j$ and its deadline $d_j$. We are also given a budget of energy $E$ and we study the scheduling problem of maximizing the throughput (i.e. the number of jobs which are completed on time). 
We propose a dynamic programming algorithm that solves the preemptive case of the problem, i.e. when the execution of the jobs may be interrupted and resumed later,   in pseudo-polynomial time. Our algorithm can  be adapted for solving the weighted version of the problem where every job is associated with a weight $w_j$ and the objective is the maximization of the sum of the weights of the jobs that are completed on time.
Moreover, we provide a strongly polynomial time algorithm to solve the non-preemptive unweighed case when the jobs have the same processing  requirements. For the weighted case, our algorithm can be adapted for solving the non-preemptive version of the problem in pseudo-polynomial time. 
\end{abstract}

\section{Introduction}

The problem of scheduling $n$ jobs with release dates and deadlines on a single processor that can vary its speed dynamically with the objective of minimizing the energy consumption has been first studied in the seminal paper by Yao, Demers and Shenker \cite{YDS95}. In this paper,  we consider the problem of maximizing the throughput for a given budget of energy.
Formally, we   are given a set of $n$ jobs $J=\{J_1,J_2,\ldots, J_n\}$, where each job $J_j$ is characterized by its processing requirement (work) $p_j$, its release date $r_j$ and its deadline $d_j$. We consider integer release dates, deadlines and processing requirements.
(For simplicity, we suppose that the earliest released job is released at $t=0$.)  
We assume that the jobs have to be executed by a single speed-scalable processor, i.e. a processor which can vary its speed over time (at a given time, the processor's speed can be any non-negative value).
The processor can execute at most one job at each time.
We measure the processor's speed in units of executed work per unit of time.
If $s(t)$ denotes the  speed of the processor at time $t$,
then the total amount of work executed by the processor during an interval of time $[t,t')$ is equal to $\int_t^{t'}s(u)du$.
Moreover, we assume that the processor's power consumption is a convex function of its speed. 
Specifically, at any time $t$, the power consumption of the processor is $P(t)=s(t)^{\alpha}$, where $\alpha>1$ is a constant.
Since the power is defined as the rate of change of the energy consumption, the total energy consumption of the processor during an interval $[t,t')$ is $\int_t^{t'}s(u)^{\alpha}du$.
Note that if the processor runs at a constant speed $s$ during an interval of time $[t,t')$, then it executes $(t'-t)\cdot s$ units of work and it consumes $(t'-t)\cdot s^{\alpha}$ units of energy.

Each job $J_j$ can start being executed after or at its release date $r_j$.
Moreover, depending on the case, we may allow or not  the preemption of jobs, i.e. the execution of a job may be suspended and continued later from the point of suspension.
Given a budget of energy $E$, our objective is to find a schedule of maximum throughput whose energy does not exceed the budget $E$, where the throughput of a schedule is defined as the number of jobs which are completed on time, i.e. before their deadline.
Observe that a job is completed on time if it is entirely executed during the interval $[r_j,d_j)$.
By extending the well-known 3-field notation by Graham et al., this problem can be denoted as $1|pmtn,r_j,E|\sum U_j$. We also consider the weighted version of the problem where every job $j$ is also associated with a weight $w_j$ and the objective is no more the maximization of the cardinality of the jobs that are completed on time, but the maximization of the sum of their weights.  We denote this problem as $1|pmtn,r_j,E|\sum w_jU_j$. 
In what follows, we consider the problem in the case where 
all jobs have arbitrary integer release dates and deadlines.

\begin{figure}
\label{tab_summary}
\begin{center}

\begin{tabular}{|c|c|c|c|}
\hline 
Problem & Weighed profit & Assumption & Time complexity \\ 
\hline 
\multirow{3}{*}{Classical}& \multirow{3}{*}{no} &  $r_j,pmtn,p_j=p$ & $O(n\log n)$ \cite{LAWLER94}
\\
\cline{3-4}
&&\multirow{2}{*}{ $r_j,pmtn$} & $O(n^5)$ \cite{LAWLER90} \\
&& &  $O(n^4)$\cite{B99}\\

\hline

\hline
\multirow{4}{*}{Classical} &\multirow{4}{*}{yes} & \multirow{2}{*}{$r_j,pmtn,p_j=p$} & $O(n^{10})$ \cite{B99w} \\
& & & $O(n^4)$ \cite{BCDJV04}\\
\cline{3-4}
& & $r_j,pmtn$ & $O(n^3W^2)$ \cite{LAWLER90}\\ 
\cline{3-4}
&& $r_j,p_j=p$ & $O(n^{7})$ \cite{B99w}\\
\hline
\hline
\multirow{4}{*}{Speed-scaling}& \multirow{4}{*}{no} & $r_j\le r_i \Leftrightarrow d_j\le d_i$ & $O(n^6\log n\log P)$ \cite{ABCL13}\\
\cline{3-4}
&& $r=0$ & $O(n^4\log n\log P)$\cite{ABCL13} \\ 
\cline{3-4}
&& $r_j,p_j=p$& $O(n^{21})$ \textbf{[this paper]}\\
\cline{3-4}
&& $r_j,pmtn$& $O(n^{6}L^9P^9)$ \textbf{[this paper]}\\
\hline
\multirow{4}{*}{Speed-scaling}& \multirow{4}{*}{yes} & $r_j\le r_i \Leftrightarrow d_j\le d_i$ & $O(n^4W^2\log n\log P)$ \cite{ABCL13}\\
\cline{3-4}
&& $r=0$ & $O(n^2W^2\log n\log P)$\cite{ABCL13} \\ 
\cline{3-4}
&& $r_j,p_j=p$& $O(n^{19}W^2)$ \textbf{[this paper]}\\
\cline{3-4}
&& $r_j,pmtn$& $O(n^{2}W^4L^9P^9)$ \textbf{[this paper]}\\
\hline

\end{tabular} 
\caption{Summary of offline Throughput maximization}

\end{center}
\end{figure}

\subsection{Related Works and our Contribution}
Angel et al. \cite{ABCL13} were the first to consider the
throughput maximization problem in the energy setting for the offline case. They studied the problem for a  
particular family of instances where the jobs have agreeable deadlines, i.e. for every pair of jobs $J_i$ and $J_j$, $r_i \leq r_j$ if and only if $d_i \leq d_j$. 
They provided
a polynomial time algorithm to solve the problem for agreeable instances. However, to the best of our knowledge, the complexity of the unweighted preemptive problem for arbitrary instances remained unknown until now. In this paper, we prove that there is a pseudo-polynomial time algorithm for solving the problem optimally. For the weighted version, the problem is $\mathcal{NP}$-hard even for instances in which all the jobs have common release dates and deadlines\cite{ABCL13}. Angel et al. showed that the problem admits a pseudo-polynomial time algorithm for agreeable instances. Our algorithm for the unweighted case can be adapted for the weighted throughput problem with arbitrary release dates and deadlines solving the problem in pseudo-polynomial time.
More recently, Antoniadis et al. \cite{MFCS13} considered a generalization of the classical knapsack problem where the objective is to maximize the total profit of the chosen items minus the cost incurred by their total weight. The case where the cost functions are convex can be translated in terms of a weighted throughput problem where the objective is to select the most profitable set of jobs taking into account the energy costs. Antoniadis et al. presented a FPTAS and a fast 2-approximation algorithm for the non-preemptive problem where the jobs have no release dates or deadlines.
We also consider the non-preemptive case in the special case where all jobs have equal processing requirements. For the unweighted version, we propose a strongly polynomial-time algorithm that gives an optimal solution. This result answers an open question left in \cite{BKLLN13} concerning the complexity of the non-preemptive energy-minimization problem when all the jobs have equal processing requirements. For the weighted case, our algorithm can solve the problem in pseudo-polynomial time.  
In Figure~\ref{tab_summary}, we summarize the results in the offline context for throughput maximization in the classical scheduling setting (with no energy considerations) as well as in the speed scaling context.


Different variants of throughput maximization in the speed scaling context have been studied in the literature (see \cite{CCLLMW07,BCLL08,LLTW07,CLMW07,Li11,CLL10}), but in what follows we focus on the complexity status of the offline case.

The paper is organized as follows: we first present an optimal algorithm
for the general case of throughput maximization where preemption
is allowed.
Finally, we consider the case where all the jobs have the 
same processing requirement and preemption is not allowed.

Because of space limitations some proofs are omitted.

\section{Preliminaries}

Among the schedules of maximum throughput, we try to find the one of minimum energy consumption.
Therefore, if we knew by an oracle the set of jobs $J^*$, $J^*\subseteq J$, which are completed on time in an optimal solution, we would simply have to apply an optimal algorithm for $1|pmtn,r_j,d_j|E$ for the jobs in $J^*$ in order to determine a minimum energy schedule of maximum throughput for our problem. Such
an algorithm has been proposed in \cite{YDS95}.
Based on this observation, we can use in our analysis some properties of an optimal schedule for $1|pmtn,r_j,d_j|E$.

Let $t_1,t_2,\ldots,t_k$ be the time points which correspond to release dates and deadlines of the jobs so that for each release date and deadline there is a $t_i$ value that corresponds to it.
We number the $t_i$ values in increasing order, i.e. $t_1<t_2<\ldots<t_k$. The following theorem is a consequence of
the algorithm \cite{YDS95} and was proved in \cite{BKP07}.

\begin{theorem} \label{theorem:OptimalPropertiesYao}
A feasible schedule for $1|pmtn,r_j,d_j|E$ is optimal if and only if all the following hold:
\begin{enumerate}
\setlength{\itemsep}{0pt}
	\item Each job $J_j$ is executed at a constant speed $s_j$.
	\item The processor is not idle at any time $t$ such that $t\in(r_j,d_j]$, for all 			$J_j\in J$.
	\item The processor runs at a constant speed during any interval $(t_i,t_{i+1}]$, for $1\leq i\leq k-1$.
	\item If others jobs are scheduled in the span $[r_j,d_j]$ of $J_j$, then their speed is necessary greater or equal to the speed of $J_j$.
\end{enumerate}
\end{theorem}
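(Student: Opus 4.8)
The plan is to prove both implications by exploiting the strict convexity of the power function $s\mapsto s^{\alpha}$ (recall $\alpha>1$), which forces any energy-minimizing schedule to flatten its speed as much as the feasibility windows permit. First I would record the one elementary fact that drives everything: if a fixed amount of work $w$ must be processed during time slots of total length $\tau$, then by Jensen's inequality the energy $\int s(u)^{\alpha}\,du$ is minimized, and strictly so unless the speed is already constant, by running at the constant speed $w/\tau$. Every necessity argument below is an application of this averaging principle.

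For the necessity direction (optimal $\Rightarrow$ conditions 1--4), I would show that the failure of each property admits a local perturbation that strictly decreases the energy while preserving feasibility, contradicting optimality. Property~1 is immediate from the averaging fact applied to the execution slots of a single job $J_j$. For Property~2, if the processor were idle at some $t\in(r_j,d_j]$ while $J_j$ is processed at positive speed elsewhere in its window, I would push an infinitesimal slice of $J_j$'s work into the idle slot; this replaces a slice run at positive speed by two slower slices, so strict convexity gives a strict decrease. Property~3 follows because the set of jobs whose window contains a point of an elementary interval $(t_i,t_{i+1}]$ is constant there, so the aggregate speed may be averaged over the whole interval without violating any window, again strictly helping unless already constant. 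Property~4 is a flattening step across intervals: if some elementary interval $I'\subseteq[r_j,d_j]$ carried a speed strictly below $s_j$, then since $J_j$ may legally run throughout $[r_j,d_j]$ I would shift a little of $J_j$'s work out of its own speed-$s_j$ slots into $I'$, moving work from a faster interval to a slower one, and strict convexity lowers the energy.

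For the sufficiency direction (conditions 1--4 $\Rightarrow$ optimal), I would use that the feasible region, described by the per-interval work allocations, is convex and the energy is strictly convex in the induced speeds, so the optimal speed profile $s(\cdot)$ is unique even though the job-to-slot assignment need not be. It then suffices to show that Properties~1--4 pin down this profile. Here I would reconstruct the Yao--Demers--Shenker critical-interval decomposition: Properties~2 and~3 force the processor to be busy at a single speed on each elementary interval, Property~1 identifies that speed with $s_j$ on the slots of $J_j$, and Property~4 forces the densest feasible interval to be executed first at its density, after which the same conditions apply recursively to the residual instance. Since the optimal schedule satisfies 1--4, this reconstruction shows that every schedule meeting 1--4 realizes the same speed profile, hence the same minimal energy.

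The main obstacle I expect is twofold. First, the Property~4 transfer must be carried out without violating either job's release date or deadline and without two jobs occupying the machine simultaneously; making the shift precise, by choosing compatible sub-slots and an amount small enough to keep all speeds well defined, is the delicate part of the necessity proof. Second, the sufficiency argument hinges on showing that 1--4 genuinely \emph{determine} the speed function rather than merely constraining it; this requires the inductive critical-interval reconstruction above, and one must verify that the residual instance obtained after peeling off the densest interval again satisfies the hypotheses so that the induction closes.
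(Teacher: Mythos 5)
A preliminary remark: the paper never proves this theorem. It is quoted as a known result --- ``a consequence of the algorithm \cite{YDS95} and \ldots proved in \cite{BKP07}'' --- so there is no in-paper proof to compare yours against; the comparison below is with the standard argument from that literature, which your plan essentially reconstructs.

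Your necessity half is correct as sketched: each property, when violated, admits a feasibility-preserving perturbation that strictly decreases energy by strict convexity of $s\mapsto s^{\alpha}$ (averaging a job's own speed for Property 1; filling an idle slot of positive measure inside a window for Property 2; averaging over an elementary interval, legitimate because the set of released-and-not-yet-due jobs is constant on $(t_i,t_{i+1}]$, for Property 3; a first-order transfer of a small amount of $J_j$'s work from a speed-$s_j$ slot into a strictly slower slot of $[r_j,d_j]$ for Property 4). The sufficiency half is the delicate one, and you flag the right obstacle, but note that the step ``Property 4 forces the densest interval to be executed at its density'' cannot be carried by Property 4 alone; it needs all four properties jointly. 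The workable version is: let $s_{\max}$ be the largest speed occurring in a schedule satisfying 1--4. By Property 1, any job touching a maximal-speed slot runs at $s_{\max}$ everywhere it runs; by Properties 2 and 4 its entire window is then busy at speed exactly $s_{\max}$. Hence the set $U=\{t \mid s(t)=s_{\max}\}$ is a union of intervals, each of which is exactly filled by the jobs whose windows it contains, so each component of $U$ has density exactly $s_{\max}$. On the other hand, the jobs whose windows lie in the true critical interval $I^*$ must all be executed inside $I^*$, which forces $s_{\max}\ge g(I^*)$; combining, $s_{\max}=g(I^*)$, the speed on $I^*$ is identically $g(I^*)$, and a work-counting argument shows $I^*$ is occupied by precisely its own jobs. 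Only after this lemma does your recursion on the residual instance (whose properties 1--4 are inherited after contracting $I^*$) close. With that step made explicit, your overall scheme --- uniqueness of the optimal speed profile by strict convexity, every schedule satisfying 1--4 realizes the same profile, and some optimal schedule satisfies 1--4 by the necessity half --- is a complete and correct proof, and it has the merit of being self-contained where the paper relies on a citation.
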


Theorem \ref{theorem:OptimalPropertiesYao} is also 
satisfied by the optimal schedule of
$1|pmtn,r_j,E|\sum U_j$ for the jobs in $J^*$.
We suppose that jobs are sorted in non-decreasing order
of their deadline, i.e. 
$d_1\le d_2 \le \ldots \le d_n$.
Moreover, we suppose that release dates, deadlines and processing 
requirements are integer.

\begin{definition}
Let $J(k,s,t)=\{ J_j | j\leq k \mbox{ and } s\le r_j < t \}$ 
be the set of the first $k$ jobs
according to the earliest deadline first ({\sc edf}) order and such that their release date are
within $s$ and $t$.
\end{definition}

\begin{lemma}\label{lemma_critical}
The total period in which the processor runs at a same speed
in an optimal solution for $1|pmtn,r_j,d_j|E$ has an integer length.
\end{lemma}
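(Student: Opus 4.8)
The plan is to read the result off the structural characterization of Theorem~\ref{theorem:OptimalPropertiesYao}, which holds for any optimal schedule of $1|pmtn,r_j,d_j|E$. The key ingredient is its third property: the processor runs at a constant speed throughout each elementary interval $(t_i,t_{i+1}]$, where $t_1<\cdots<t_k$ are precisely the release dates and deadlines. Since all release dates and deadlines are assumed integer, each $t_i$ is an integer, so every elementary interval has integer length $t_{i+1}-t_i$. Consequently the speed profile $s(\cdot)$ of the optimal schedule is piecewise constant with all of its breakpoints contained in the integer set $\{t_1,\dots,t_k\}$, and the active horizon $[t_1,t_k]$ is partitioned into elementary intervals of integer length.

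First I would fix an arbitrary speed $v$ occurring in the schedule and examine the set $\{t : s(t)=v\}$. By the remark above the speed is a single constant on each elementary interval, so this set is exactly the union of those intervals $(t_i,t_{i+1}]$ whose constant speed equals $v$. Its total length is then a finite sum of terms $t_{i+1}-t_i$ with integer endpoints, hence an integer. This simultaneously takes care of the fact that the intervals realizing a given speed need not be contiguous and of the idle case $v=0$.

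The only place where genuine care is needed is the interpretation of the statement: the quantity asserted to be integral is the \emph{aggregate} time the processor spends at a fixed speed, not the running time of a single job. Indeed, by the first property a job $J_j$ is run at one speed $s_j$ for a total time $p_j/s_j$, which may be fractional; integrality is recovered only after aggregating over all jobs (equivalently, all elementary intervals) sharing the same speed. Should one prefer not to use Theorem~\ref{theorem:OptimalPropertiesYao} as a black box, an alternative is to unwind the YDS recursion directly: every critical interval has endpoints among the integer $t_i$, and contracting it shortens the timeline by its integer length, so by induction the remaining instance keeps integer breakpoints and each group of jobs scheduled at a common speed occupies a region of integer total length. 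I therefore expect the main obstacle to be expository rather than mathematical, namely pinning down that the ``total period at the same speed'' is a union of full elementary intervals; once that is established the integrality is immediate.
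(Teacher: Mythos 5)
Your proof is correct and follows essentially the same route as the paper: both invoke property 3 of Theorem~\ref{theorem:OptimalPropertiesYao} to identify the region at any fixed speed as a union of elementary intervals $(t_i,t_{i+1}]$, and conclude integrality from the fact that the $t_i$ are integer release dates and deadlines. Your additional care about non-contiguous intervals and the aggregate-versus-per-job interpretation only makes explicit what the paper's terser argument leaves implicit.
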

\begin{proof}
The total period is defined by a set of intervals $(t_i,t_{i+1}]$
for $1\leq i\leq k-1$ thanks to the property 3) in Theorem~\ref{theorem:OptimalPropertiesYao}.
Since each $t_i$ corresponds to some release date or some deadline,
then $t_i\in \mathbb{N},~1\le i\le k$.
Thus every such period has necessarily an integer length.
\end{proof}

\begin{definition}
Let $L=d_{max} - r_{min} $ be the span of the whole schedule. 
To simplify the notation, we assume that $r_{min}=0$.
\end{definition}

\begin{definition}
Let $P=\sum_j p_j$ be the total processing requirement of all the jobs.
\end{definition}

\begin{definition}
We call an {\sc edf} schedule, a schedule such that
at any time,
the processor schedules the job that has the smallest
deadline among the set of available jobs
at this time.
\end{definition}
In the sequel, all the schedules considered are {\sc edf} schedules.

\section{The preemptive case}

In this part, we propose an optimal algorithm which
is based on dynamic programming depending on the span
length $L$ and the maximum processing requirement $P$. 
As mentioned previously,
among the schedules 
of maximum throughput, our algorithm constructs a schedule
with the minimum energy consumption.

For a subset of jobs $S\subseteq J$, a schedule which involves only the jobs 
in $S$ will be called a $S$-schedule.

\begin{definition}
Let $G_k(s,t,u)$ be the minimum energy consumption
of a $S$-schedule with $S\subseteq J(k,s,t)$ such that 
$|S|=u$ and such that the jobs in $S$ are 
entirely scheduled in $[s,t]$ 
\end{definition}

Given a budget $E$ that we cannot exceed, the function objective is
$\max\{u~|~G_n({0,d_{max},u}) \le E;~0\le u\le n \}$.

\begin{definition}
Let $F_{k-1}(x,y,u,\ell,i,a,h)$ be the minimum energy consumption
of a $S$-schedule with $S\subseteq J(k-1,s,t)$ such that 
$|S|=u$ and such that the jobs in $S$ are 
entirely scheduled in $[x,y]$ during at most $a+h\times\frac{l}{i}$ unit times. 
Moreover, we assume that each maximal block of continuous jobs
of $\mathcal{S}$ starts at a release date
and has a length equal to $a'+h'\times\frac{\ell}{i}$
with $a',h'\in \mathbb{N}$.
\end{definition}

Next, we define the set of all important time of an optimal schedule
in which every job can starts and ends, and we show that the size
of this set is pseudo-polynomial.

\begin{definition}\label{def:Omega}
Let $\Omega= \{r_j~|~ j=1,\ldots, n\} \cup \{d_j~|~ j=1,\ldots, n\}$.
\end{definition}

\begin{definition}\label{def:Phi}
Let $\Phi =\{s+ h\times \frac{\ell}{i}\le L~|~ i = 1,\ldots ,P;~ 
 h = 0,\ldots ,i;~ s=0,\ldots, L;~ \ell = 1,\ldots, L
 \}$
\end{definition}

\begin{prop}\label{prop_phi}
There exists an optimal schedule $\mathcal{O}$ in which for each job, its
starting times and finish times belong to the set $\Phi$, and such that each 
job is executed with a speed $\frac{i}{\ell}$ for some $i=1,\ldots,P$ and 
$\ell=1,\ldots,L$.
\end{prop}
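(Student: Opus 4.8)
The plan is to take for $\mathcal{O}$ the minimum-energy schedule that the algorithm of \cite{YDS95} produces for the set $J^*$ of completed jobs, so that $\mathcal{O}$ satisfies the four properties of Theorem~\ref{theorem:OptimalPropertiesYao} and is {\sc edf}; I will then prove both assertions by induction on the critical-interval decomposition underlying that schedule. Recall that \cite{YDS95} carves the time axis into critical intervals, the first of which, $C=[\alpha,\beta]$, has endpoints that are a release date and a deadline, hence integers, and on which every job whose whole span lies in $C$ is executed at the common speed equal to the density of $C$.

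First I would settle the speeds. For the top critical interval $C=[\alpha,\beta]$, its length $\ell=\beta-\alpha$ is an integer with $\ell\le L$ (Lemma~\ref{lemma_critical} gives the same conclusion), and the total work executed in it, $i$, is a sum of integer processing requirements, so $i\in\{1,\ldots,P\}$; thus the common speed is $\frac{i}{\ell}$ in the required range, and every job of $C$ runs at this speed (properties~1 and~3 of Theorem~\ref{theorem:OptimalPropertiesYao}). Note that $p_j\le i$ for each such job, which will yield $h\le i$ below.

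The heart of the argument is the location of the start and finish times inside $C$. Running {\sc edf} on the jobs of $C$, the first job starts at $\alpha$; any later job either starts at its (integer) release date, when it preempts a lower-priority job, or starts exactly when some higher-priority job completes. Moreover a job $J_j$ completes at its start plus $\big(p_j+\sum_k p_k\big)\frac{\ell}{i}$, where the $J_k$ are the jobs that preempt $J_j$ before it finishes; crucially all these jobs live in $C$ and therefore share the single speed $\frac{i}{\ell}$, so this duration is an integer multiple of $\frac{\ell}{i}$. An induction over the events of $C$ (each start is an integer or an earlier completion, each completion adds a multiple of $\frac{\ell}{i}$ to an earlier time) then shows that every start and finish time equals $s+h\frac{\ell}{i}$ with $s\in\mathbb{N}$ and $0\le h\le i$; since such a time is $\le\beta\le L$, it lies in $\Phi$ by Definition~\ref{def:Phi}.

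Finally I would close the induction by removing $C$ and gluing: the remaining jobs keep integer release dates and deadlines (the collapse map $x\mapsto x,\ \alpha,\ x-\ell$ is integer-preserving) and the span strictly decreases, so the induction hypothesis places every start and finish time of the compressed schedule in $\Phi$, and mapping back to real time adds only $0$ or the integer $\ell$, which preserves the form $s+h\frac{\ell'}{i'}$. The main obstacle is exactly the bookkeeping of the previous paragraph: under preemption a finish time is an accumulation of several running durations, and it collapses to a single term $h\frac{\ell}{i}$ only because all jobs inside one critical interval run at the same speed; jobs of different speeds interact solely through the integer interval lengths removed by the recursion, which is why no genuine sum of mismatched terms $\frac{\ell}{i}$ and $\frac{\ell'}{i'}$ ever appears.
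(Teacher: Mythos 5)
Your proof is correct and follows essentially the same route as the paper's: both arguments run the YDS algorithm on the optimal job set, obtain the speed $\frac{i}{\ell}$ from integer work over an integer critical-interval length, show by the {\sc edf} structure that every start/finish time inside a critical interval is a release date plus a multiple of $\frac{\ell}{i}$ (hence in $\Phi$), and handle the interaction between critical intervals via integer-length shifts. The only cosmetic differences are that the paper first reduces to unit-work jobs while you track $p_j + \sum_k p_k$ directly, and that you phrase the recursion as an explicit compress-and-glue induction where the paper speaks of right-shifting by the integer lengths of previously removed intervals.
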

\begin{proof}
W.l.o.g. we can consider that each job has an unit processing requirement. If 
it is not the case, we can split a job $J_j$ into $p_j$ jobs, each one with
an unit processing requirement.

We briefly explain the algorithm in~\cite{YDS95} which gives an optimal 
schedule. At each step, it selects the (critical) interval I=[s,t] with
$s$ and $t>s$ in $\Omega= \{r_j~|~ j=1,\ldots, n\} \cup 
 \{d_j~|~ j=1,\ldots, n\}$, such that 
$s_I=\frac{|\{J_j~|~s\le r_j\le d_j\le t\}|}{t-s}$ is maximum.
All the jobs inside this interval are executed at the speed $s_I$, which is
of the form $\frac{i}{\ell}$ for some $i=1,\ldots,P$ and $\ell=1,\ldots,L$,
and according to the {\sc edf} order.
This interval cannot be used any more, and we recompute a new critical interval
without considering the jobs and the previous critical intervals, until all
the jobs have been scheduled.

We can remark that the length of each critical interval (at each step) 
$I=[s,t]$ is an integer. This follows from the fact that $s=r_i\in \mathbb{N}$ 
for some job $J_i$, and $t=d_j\in \mathbb{N}$ for some job $J_j$, moreover we
remove integer lengths at each step (the length of previous critical
intervals which intersect the current one), so the new considered critical 
interval has always an integer length.

Then we can define every start times or completion times of each job in this 
interval.
We first prove that the end time of a job in a continuous critical interval,
i.e. a critical interval which has an empty intersection with all other
critical intervals, belongs to $\Phi$. Let $J_k$ be any job in a continuous 
critical interval and let $k_s$ and $k_e$ be respectively its start and end 
times.  Then there is no idle time between $s=r_x$ (for some $J_x$)
and $k_e$ since it is a critical interval. Let $v=\frac{i}{\ell}$ be the 
processor speed in this interval and $p=\frac{\ell}{i}$ be the processing time 
of a job.  The jobs that execute (even partially) between $k_s$ and $k_e$
execute neither before $k_s$ nor after $k_e$ since it is an
{\sc edf} schedule. Thus $k_e-k_s$ is a multiple of $p$.
Two cases can occur:
\begin{itemize}
\setlength{\itemsep}{0pt}
\item Either $J_k$ causes an interruption and hence $k_s = r_k$.
\item Or $J_k$ does not cause any interruption and hence the jobs that execute
between $s$ and $k_s$, are fully scheduled in this interval. Consequently, 
$k_s-s$ is a multiple of $p$.
\end{itemize}
\vspace*{-5pt}
In both cases, there is a release date $r_y$ (either $r_k$ or $r_x$) such that 
between $r_y$ and $k_e$, the processor is never idle and such that $k_e$ is 
equal to $r_y$ modulo $p$. On top of that, the distance between $r_y$ and $t$
is not greater than $n\times p$.  Hence, $k_e\in \Phi$.
Now consider the start time of any job. This time point is either the
release date of the job or is equal to the end time of the "previous" one. 
Thus, starting times also belong to $\Phi$.

Now we consider the start and the completion times of a job in a 
critical interval $I$ in which there is at least one another critical interval 
(with greater speeds) included in $I$ or intersecting $I$. Let $A$ be the 
union of those critical intervals.
Since jobs of $I$ cannot be scheduled on the intervals $A$, the start time and
completion time of those jobs have to be (right)-shifted with an integer value
(since each previously critical interval has an integer length).
Thus the starts time and completion time for all jobs still belong to $\Phi$.
\end{proof}

\begin{prop}\label{prop_G}
One has
\[  G_k(s,t,u)=\min \left\{
\begin{aligned}
& G_{k-1}(s,t,u)&\\
& \min_{ 
\substack{x\in \Phi\\
0\le u_1 \le u\\
0\le u_2 \le u\\
0\le u_1+u_2\le u-1\\
0\le a\le L;~ 
1\le \ell \le L\\
1\le i \le P;~
0\le h \le P\\
{y-x=a+(p_k+h)\frac{\ell}{i}}\\
r_k\le x \le y \le d_k 
}}&
 \left\{
\begin{array}{l}
 G_{k-1}(s,x,u_1)
+F_{k-1}(x,y,u_2,\ell,i,a,h)\\
+\Big(\dfrac{i}{\ell}\Big)^{\alpha -1}p_k
+ G_{k-1}(y,t,u-u_1-u_2-1)
\end{array}
\right\}
\end{aligned}
\right.
\]
$G_0(s,t,0)=0~\forall s,t\in \Phi$\\
$G_0(s,t,u)=+\infty~\forall s,t\in \Phi$ and $u>0$
\end{prop}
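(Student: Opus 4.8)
The plan is to establish the recurrence as an identity by proving the two inequalities $G_k(s,t,u)\le\text{RHS}$ and $G_k(s,t,u)\ge\text{RHS}$ separately, after disposing of the base case $k=0$. The base case is immediate: $J(0,s,t)=\emptyset$, so the only admissible $S$-schedule is the empty one, which has energy $0$ and serves $u=0$ jobs, while no schedule can serve $u>0$ jobs; this gives the stated values $G_0(s,t,0)=0$ and $G_0(s,t,u)=+\infty$. For the recurrence I would fix an optimal $S$-schedule $\mathcal{O}$ realizing $G_k(s,t,u)$ and split on whether $J_k\in S$. If $J_k\notin S$, then $S\subseteq J(k-1,s,t)$, so $\mathcal{O}$ is itself an admissible $S$-schedule for $G_{k-1}(s,t,u)$ and its energy is at least $G_{k-1}(s,t,u)$; conversely every schedule counted by $G_{k-1}(s,t,u)$ is counted by $G_k(s,t,u)$ because $J(k-1,s,t)\subseteq J(k,s,t)$. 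This matches the first branch of the minimum.

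The substance is the branch $J_k\in S$, where I would produce the inequality $G_k(s,t,u)\ge\text{RHS}$. Since $J_k$ has the largest deadline among $J_1,\dots,J_k$, Theorem~\ref{theorem:OptimalPropertiesYao} forces it to run at a single constant speed, which by Proposition~\ref{prop_phi} has the form $\frac{i}{\ell}$, and its start and finish points lie in $\Phi$. By property~4 of Theorem~\ref{theorem:OptimalPropertiesYao} every other job executed inside the span of $J_k$ runs at speed at least $\frac{i}{\ell}$, and combined with the {\sc edf} rule this squeezes those jobs into maximal blocks around the pieces of $J_k$. I would then take the window $[x,y]$ with $x,y\in\Phi$ and $r_k\le x\le y\le d_k$ to be the minimal interval containing the whole execution of $J_k$ together with every job interleaved with it, so that $S\setminus\{J_k\}$ splits cleanly into jobs scheduled in $[s,x]$, in $[x,y]$, and in $[y,t]$, of respective cardinalities $u_1$, $u_2$, and $u-u_1-u_2-1$. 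The outer pieces are accounted for by $G_{k-1}(s,x,u_1)$ and $G_{k-1}(y,t,u-u_1-u_2-1)$, the window jobs by $F_{k-1}(x,y,u_2,\ell,i,a,h)$, and $J_k$ itself contributes energy $\big(\frac{i}{\ell}\big)^{\alpha-1}p_k$ (power $(\frac{i}{\ell})^{\alpha}$ times its running time $p_k\frac{\ell}{i}$); the identity $y-x=a+(p_k+h)\frac{\ell}{i}$ merely records that the window is filled by $J_k$'s time $p_k\frac{\ell}{i}$ plus the at most $a+h\frac{\ell}{i}$ time used by the window jobs. Summing the four energies yields the claimed lower bound.

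For the reverse inequality I would invert this construction: given any parameters attaining the minimum, I concatenate an optimal schedule for $G_{k-1}(s,x,u_1)$ on $[s,x]$, an optimal $F$-schedule on $[x,y]$ into whose leftover time $J_k$ is inserted at speed $\frac{i}{\ell}$, and an optimal schedule for $G_{k-1}(y,t,u-u_1-u_2-1)$ on $[y,t]$, and check that the result is a feasible {\sc edf} $S$-schedule of exactly the stated energy. Feasibility of the insertion is where the block structure of $F_{k-1}$ is used: because the window jobs occupy at most $a+h\frac{\ell}{i}$ time arranged in blocks of length $a'+h'\frac{\ell}{i}$ starting at release dates, the complementary time has total length $p_k\frac{\ell}{i}$ and can be filled by $J_k$ at precisely speed $\frac{i}{\ell}$ while respecting {\sc edf}, since $J_k$ has the largest deadline and therefore only runs in the gaps left by the other jobs.

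The main obstacle is establishing the clean three-way partition in the branch $J_k\in S$: I must show that the cut points $x,y$ can be chosen in $\Phi$, that no selected job straddles either cut, and that the window length decomposes exactly as $a+(p_k+h)\frac{\ell}{i}$ with the block structure required by $F_{k-1}$. All three hinge on the interaction between the {\sc edf} order and property~4 of Theorem~\ref{theorem:OptimalPropertiesYao}: the fact that jobs overlapping $J_k$'s span are at least as fast is exactly what prevents them from crossing the window boundaries and what forces their blocks to begin at release dates, while Proposition~\ref{prop_phi} guarantees that all the relevant time points already lie in $\Phi$ and that lengths are measured in the units $\frac{\ell}{i}$ tracked by $a$ and $h$. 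The gluing direction carries the dual difficulty of verifying that $J_k$ can always be reinserted at exactly speed $\frac{i}{\ell}$ into the free time the $F$-schedule leaves, which is precisely the reason $F_{k-1}$ is defined to control both the total occupied time $a+h\frac{\ell}{i}$ and the individual block lengths.
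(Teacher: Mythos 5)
Your skeleton is essentially the paper's: the same two inequalities, the same case split on whether $J_k$ is selected, the same three-interval decomposition $[s,x]$, $[x,y]$, $[y,t]$ with $x,y$ taken as (essentially) the first start and last completion of $J_k$, the use of $F_{k-1}$ for the window and $G_{k-1}$ for the outer pieces, and the same gluing construction for the upper bound, including the observation that the leftover $p_k\frac{\ell}{i}$ units of time inside $[x,y]$ accommodate $J_k$ at speed $\frac{i}{\ell}$. However, at the step you yourself flag as the main obstacle --- that no selected job straddles either cut point --- the mechanism you invoke is wrong, and the paper's actual device is missing. Property 4 of Theorem~\ref{theorem:OptimalPropertiesYao} (jobs inside $[r_k,d_k]$ run at least as fast as $J_k$) says nothing about whether a job released before $x$ can execute after $x$; a speed ordering cannot prevent boundary crossing. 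The {\sc edf} rule alone does not settle it either: it forbids crossing only when $d_j<d_k$ strictly, and when $d_j=d_k$ an {\sc edf} schedule may legitimately run $J_k$ at time $x$ while $J_j$ is still pending, so a job of $\mathcal{S}_1$ really can spill past $x$.

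The paper closes exactly this hole with an extremal argument that your proposal never introduces: among the optimal schedules realizing $G_k(s,t,u)$ (with times in $\Phi$ by Proposition~\ref{prop_phi}), pick one in which the first start $x$ of $J_k$ is maximal and its last completion $y$ is maximal; then, if some job $J_j$ with $d_j\le d_k$ released before $x$ (resp.\ before $y$) were still executing after $x$ (resp.\ after $y$), one could swap a piece of $J_j$ with an equal-length piece of $J_k$ --- feasible precisely because $d_j\le d_k$ --- producing an equally cheap schedule in which $J_k$ starts (resp.\ ends) strictly later, contradicting maximality. This exchange argument is the entire content of properties P1 and P2 in the paper's proof and is what makes the three-way partition legitimate. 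A second, smaller misattribution: the integer part $a'$ of each block length in the window comes from property 3 of Theorem~\ref{theorem:OptimalPropertiesYao} (speed changes occur only at integer time points, so the sub-blocks running strictly faster than $J_k$ have integer total length), not from Proposition~\ref{prop_phi}, which only supplies the form $\frac{i}{\ell}$ of the speed and the membership of the endpoints in $\Phi$. Without the maximality-plus-swap device the lower-bound direction of the recurrence is unproven, so the proposal has a genuine gap at its central step.
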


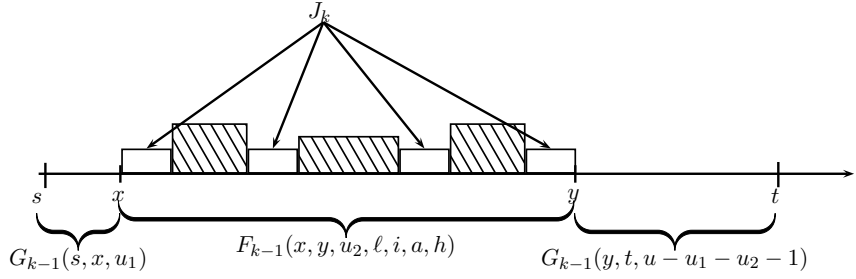
\begin{figure}[!h]
\begin{center}
\scalebox{0.8} 
{
\begin{pspicture}(0,-2.978125)(19.655468,2.978125)
\psline[linewidth=0.04cm,arrowsize=0.05291667cm 2.0,arrowlength=1.4,arrowinset=0.4]{->}(0.27546877,-1.0603125)(19.63547,-1.0603125)
\usefont{T1}{ppl}{m}{n}
\rput(0.22921875,-1.64375){$s$}
\usefont{T1}{ppl}{m}{n}
\rput(2.1692188,-1.58375){$x$}
\usefont{T1}{ppl}{m}{n}
\rput(17.759218,-1.58375){$t$}
\psline[linewidth=0.04cm](2.2154686,-0.85662806)(2.2154686,-1.2717886)
\usefont{T1}{ppl}{m}{n}
\rput(12.9792185,-1.58375){$y$}
\psline[linewidth=0.04cm](13.015469,-0.85662806)(13.015469,-1.2717886)
\psframe[linewidth=0.04,dimen=outer,fillstyle=vlines,hatchwidth=0.04,hatchangle=30.0](5.235469,0.1396875)(3.4354687,-1.0603125)
\psframe[linewidth=0.04,dimen=outer,fillstyle=vlines,hatchwidth=0.04,hatchangle=30.0](11.835468,0.1396875)(10.035469,-1.0603125)
\usefont{T1}{ptm}{m}{n}
\psline[linewidth=0.04cm](0.43546876,-0.8003125)(0.43546876,-1.4003125)
\psline[linewidth=0.04cm](17.835468,-0.7803125)(17.835468,-1.3803124)
\psframe[linewidth=0.04,dimen=outer,fillstyle=vlines,hatchwidth=0.04,hatchangle=30.0](8.835468,-0.1603125)(6.4354687,-1.0603125)
\usefont{T1}{ptm}{m}{n}
\psbrace[rot=90,ref=C,nodesepB=4pt](0.23,-2)(2.2,-2){$G_{k-1}(s,x,u_1)$}
\psbrace[rot=90,ref=C,nodesepB=4pt](13,-2)(17.76,-2){$G_{k-1}(y,t,u-u_1-u_2-1)$}
\psbrace[rot=90,ref=C,nodesepB=4pt](2.17,-1.7)(12.98,-1.7){$F_{k-1}(x,y,u_2,\ell,i,a,h)$}
\usefont{T1}{ptm}{m}{n}
\psframe[linewidth=0.04,dimen=outer](3.4354687,-0.4603125)(2.2354689,-1.0603125)
\usefont{T1}{ptm}{m}{n}
\rput(6.946875,2.7896874){$J_k$}
\psframe[linewidth=0.04,dimen=outer](6.4354687,-0.4603125)(5.235469,-1.0603125)
\psframe[linewidth=0.04,dimen=outer](13.035469,-0.4603125)(11.835468,-1.0603125)
\psframe[linewidth=0.04,dimen=outer](10.035469,-0.4603125)(8.835468,-1.0603125)
\psline[linewidth=0.04cm,arrowsize=0.05291667cm 2.0,arrowlength=1.4,arrowinset=0.4]{->}(7.0354686,2.5396874)(2.8354688,-0.4603125)
\psline[linewidth=0.04cm,arrowsize=0.05291667cm 2.0,arrowlength=1.4,arrowinset=0.4]{->}(7.0354686,2.5396874)(5.835469,-0.4603125)
\psline[linewidth=0.04cm,arrowsize=0.05291667cm 2.0,arrowlength=1.4,arrowinset=0.4]{->}(7.0354686,2.5396874)(9.435469,-0.4603125)
\psline[linewidth=0.04cm,arrowsize=0.05291667cm 2.0,arrowlength=1.4,arrowinset=0.4]{->}(7.0354686,2.5396874)(12.435469,-0.4603125)
\end{pspicture} 
}

\end{center}
\caption{Illustration of Proposition~\ref{prop_G}}
\end{figure}

\begin{proof}
Let $G'$ be the right hand side of the formula, 
$G'_1$ be the first line of $G'$ and 
$G'_2$ be the second line of $G'$.

\textbf{We first prove that $G_k(s,t,u)\le G'$.}


Since $J(k-1,s,t)\subseteq J(k,s,t)$,
then $ G_{k}(s,t,u)\le G_{k-1}(s,t,u)= G'_1$.

Now consider a schedule 
$\mathcal{S}_1$ that realizes $G_{k-1}(s,x,u_1)$, 
$\mathcal{S}_2$ that realizes $F_{k-1}(x,y,u_2,\ell,i,a,h)$ 
such that $y-x=a+(p_k+h)\times\frac{\ell}{i}$
and
$\mathcal{S}_3$ that realizes $G_{k-1}(y,t,u-u_1-u_2-1)$.
We build a schedule with $\mathcal{S}_1$ from $s$ to $x$, with 
$\mathcal{S}_2$ from $x$ to $y$ and with 
$\mathcal{S}_3$ from $y$ to $t$.

Since 
$F_{k-1}(x,y,u_2,\ell,i,a,h)$ is a schedule where the processor
execute jobs during at most $a+h\times\frac{l}{i}$ unit times
and we have $y-x=a+(p_k+h)\times\frac{\ell}{i}$, then 
there is at least $p_k\times\frac{\ell}{i}$ units time for $J_k$.
Thus $J_k$ can be
scheduled with speed $\frac{i}{\ell}$ during $[x,y]$.

Obviously, the subsets $J(k-1,s,x), J(k-1,x,y)$ 
and $J(k-1,y,t)$ do not intersect, so
this is a feasible schedule, and its cost is $G'_2$,
thus $G_{k}(s,t,u)\le G'_2 $.\\

\textbf{We now prove that $G'\le G_k(s,t,u)$.}

If $J_k \notin \mathcal{O}$ such that $\mathcal{O}$ realizes $G_k(s,t,u)$,
then $G'_1= G_k(s,t,u)$.

Now, let us consider the case $J_k \in \mathcal{O}$.

We denote by $\mathcal{X}$ the schedule that realizes $G_k(s,t,u)$ 
in which
the first starting time $x$ of $J_k$ is maximal,
and $y$ the last completion time of $J_k$ is
also maximal. According to Proposition~\ref{prop_phi}, we
assume that $x,y\in \Phi$.
We split $\mathcal{X}$ into three sub-schedules 
$\mathcal{S}_1\subseteq J(k-1,s,x)$,
$\mathcal{S}_2\subseteq J(k-1,x,y)\cup \{J_k\}$ and 
$\mathcal{S}_3\subseteq J(k-1,y,t)$.

We claim that we have the following properties:
\begin{itemize}
\setlength{\itemsep}{-2pt}
\item[P1)] all the jobs of $\mathcal{S}_1$ are released in $[s,x]$ 
and are completed before $x$,
\item[P2)] all the jobs of $\mathcal{S}_2$ are released in $[x,y]$ 
and are completed before $y$,
\item[P3)] all the jobs of $\mathcal{S}_3$ are released in $[y,t]$ 
and are completed before $t$.
\end{itemize}

\textit{We prove P1.\\}
Suppose that there is a job $J_j\in \mathcal{S}_1$ which is not
completed before $x$. Then we can swap some part 
of $J_j$ of length $\ell$ which is scheduled after $x$
with some part of $J_k$ of length $\ell$ at time $x$. This can be done
since we have $d_j\le d_k$. Thus we
have a contradiction with the fact that $x$ was maximal.\\

\textit{We prove P2.\\}
Similarly, suppose that there is a job 
$J_j\in \mathcal{S}_2$ which is not
completed before $y$. Then we can swap some part 
of $J_j$ of length $\ell$ which is scheduled after $y$
with some part of $J_k$ of length $\ell$ in $[x,y]$. It can be done
since we have $d_j\le d_k$. Thus we
have a contradiction with the fact that $y$ was maximal.\\

\vspace*{-2pt}
\textit{We prove P3.\\}
If there exists a job in $\mathcal{S}_3$ which is not completed
at time $t$, then the removal of this job would lead
to a lower energy consumption schedule for
$\mathcal{S}_3$ which contradicts the definition of
$G_{k-1}(y,t,|\mathcal{S}_3|)$\\

Let us consider the schedule 
$\mathcal{S}'_2=\mathcal{S}_2\setminus J_k$ in $[x,y]$.
Since $[x,y]\subseteq [r_k,d_k]$, thanks to property 4) of 
Theorem~\ref{theorem:OptimalPropertiesYao}, the speeds of jobs
in $\mathcal{S}'_2$ are necessarily greater or equal to the speed
of $J_k$. 
Let us consider any maximal block $b$ of continuous jobs in 
$\mathcal{S}'_2$.
This block can be partitioned into two sub-blocks $b_1$ and $b_2$ such that $b_1$ (resp. $b_2$) contains all the jobs
of $b$ which are scheduled with a speed equals to (resp. strictly greater than) the speed of $J_k$.
All the jobs scheduled in block $b$ are also totally completed in $b$ (this comes from the {\sc edf} property and because
$J_k$ has the biggest deadline).
Notice that the speed of $J_k$ is equal to $\frac{i}{\ell}$
for some value $i=1,\ldots,P$ and $\ell=1,\ldots,L$ thanks to
Proposition~\ref{prop_phi}.
Thus the total processing time of $b_1$ is necessarily
$h'\times\frac{\ell}{i}$.
Moreover since from property 3) of Theorem~\ref{theorem:OptimalPropertiesYao}, all the speed changes occur at time $t_i\in\mathbb{N}$,
the block $b_2$ has an integer length.
Therefore, every block $b$ has a length equal to $a'+h'\times\frac{\ell}{i}$ and the total
processing time of $\mathcal{S}'_2$
is $a+h\times\frac{\ell}{i}$.
Moreover, every block $b$ in $\mathcal{S}'_2$ starts at a release date (this comes from the {\sc edf} property).
On top of that, we have $y-x=a+(h+p_k)\times\frac{\ell}{i}$ with 
$a=0,1,\ldots, L$ and $h=0,\ldots,i$.
Moreover, every block $b$ in $\mathcal{S}'_2$ starts at a release date (this comes from the {\sc edf} property).
Hence the cost of the schedule $\mathcal{S}'_2$ is greater than 
$F_{k-1}(x,y,|\mathcal{S}'_2|,\ell,i,a,h)$.
The energy consumption of $J_k$ is exactly $p_k\times(\frac{i}{\ell})^{\alpha -1}$.

Similarly, the cost of the schedule 
$\mathcal{S}_1$ is greater than $G_{k-1}(s,x,|\mathcal{S}_1|)$ and
the cost of $\mathcal{S}_3$ is greater than 
$G_{k-1}(y,t,|\mathcal{S}_3|)$.

Therefore, 
$ G_k(s,t,u)\ge  
G_{k-1}(s,x,|\mathcal{S}_1|)  +
F_{k-1}(x,y,|\mathcal{S}_2|,\ell,i,a,h) +
G_{k-1}(y,t,|\mathcal{S}_3|)
+p_k\Big(\dfrac{i}{\ell}\Big)^{\alpha -1}
= G'_2$
and
$ G_k(s,t,u)\ge G'$.
\end{proof}

\begin{prop}\label{prop_F}
One has
\[  F_{k-1}(x,y,u,\ell,i,a,h)=
\min_{
\substack{
0\le a'\le a;~0\le h'\le h\\
x\le x'=r_j\le y;~j\le k\\
1\le \beta\le u\\
y'=x'+a'+h'\times\frac{\ell}{i}\le y
}} 
\{G_{k-1}(x',y',\beta)+F_{k-1}(y',y,u-\beta,\ell,i,a-a',h-h')\}\\
\]

$F_{k-1}(x,y,0,\ell,i,a,h)=0$

$F_{k-1}(x,y,u,\ell,i,0,0)=+\infty$\\
\end{prop}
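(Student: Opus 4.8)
The plan is to establish the recurrence by proving the two inequalities $F_{k-1}(x,y,u,\ell,i,a,h)\le R$ and $F_{k-1}(x,y,u,\ell,i,a,h)\ge R$, where $R$ denotes the right-hand side of the claimed formula. The two boundary cases are immediate: with $u=0$ the empty schedule is feasible at zero energy; with $u>0$ but $a=h=0$ the processing-time budget $a+h\times\frac{\ell}{i}=0$ forbids scheduling any of the $u\ge1$ required jobs (each needs positive processing time), so no feasible $S$-schedule exists and the value is $+\infty$. For $u\ge1$ and $(a,h)\ne(0,0)$ the argument is a standard optimal-substructure decomposition: by definition every schedule counted by $F$ is a disjoint union of maximal continuous blocks, each starting at a release date and of length $a''+h''\times\frac{\ell}{i}$ with total processing time $a+h\times\frac{\ell}{i}$; the recurrence peels off the \emph{leftmost} block (handled by $G$) and defers the remaining blocks to a smaller instance of $F$.

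For the direction $F_{k-1}(x,y,u,\ell,i,a,h)\ge R$, I would take a schedule $\mathcal{O}$ realizing the left-hand side and isolate its leftmost maximal block $B$. By the defining property of $F$, $B$ starts at a release date $x'=r_j$ with $x\le x'$ and $j\le k$, and has length $a'+h'\times\frac{\ell}{i}$ for integers $0\le a'\le a$, $0\le h'\le h$, so it ends at $y'=x'+a'+h'\times\frac{\ell}{i}\le y$. Let $\beta\ge1$ be the number of jobs in $B$; they are released in $[x',y']$ and completed within it, so $B$ is a feasible schedule of $\beta$ jobs on $[x',y']$ and its energy is at least $G_{k-1}(x',y',\beta)$. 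The remaining jobs form the later blocks and all lie in $[y',y]$: indeed, since all schedules are {\sc edf} and $B$ is maximal, the processor is idle immediately after $y'$, so any job executed later must be released after $y'$ (otherwise {\sc edf} would run it during that idle gap), placing it in $J(k-1,y',y)$. These $u-\beta$ jobs have total processing time $(a-a')+(h-h')\times\frac{\ell}{i}$ and still satisfy the block conditions, so their energy is at least $F_{k-1}(y',y,u-\beta,\ell,i,a-a',h-h')$. Summing the two bounds and noting that $(x',y',\beta,a',h')$ is one admissible tuple in the minimization defining $R$ gives the inequality.

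For the reverse direction, I would take the tuple $(x',y',\beta,a',h')$ attaining $R$, pick an optimal schedule $\sigma_1$ for $G_{k-1}(x',y',\beta)$ on $[x',y']$ and an optimal schedule $\sigma_2$ for $F_{k-1}(y',y,u-\beta,\ell,i,a-a',h-h')$ on $[y',y]$, and concatenate them into $\mathcal{S}$. Since $J(k-1,x',y')$ and $J(k-1,y',y)$ are disjoint, $\mathcal{S}$ schedules exactly $u$ distinct jobs inside $[x,y]$, its total processing time is at most $\bigl(a'+h'\times\frac{\ell}{i}\bigr)+\bigl((a-a')+(h-h')\times\frac{\ell}{i}\bigr)=a+h\times\frac{\ell}{i}$, and its energy equals $G_{k-1}(x',y',\beta)+F_{k-1}(y',y,u-\beta,\ell,i,a-a',h-h')=R$; hence $F_{k-1}(x,y,u,\ell,i,a,h)\le R$ as soon as $\mathcal{S}$ is shown to be admissible for $F$.

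The step I expect to be the crux is exactly this admissibility check: verifying that $\mathcal{S}$ respects the block structure demanded by $F$. Concretely, I must argue that the optimal $G$-schedule $\sigma_1$ realizes $[x',y']$ as a single maximal continuous block beginning at the release date $x'$ and of length exactly $a'+h'\times\frac{\ell}{i}$, and that it does not merge with the first block of $\sigma_2$. For this I would invoke Theorem~\ref{theorem:OptimalPropertiesYao}: in a minimum-energy schedule the processor is never idle inside $(r_j,d_j]$, all speed changes occur at the integer break points, and by Proposition~\ref{prop_phi} every speed has the form $\frac{i'}{\ell'}$ with start and finish times in $\Phi$. Reusing the block-partition argument from the proof of Proposition~\ref{prop_G} — splitting a continuous block into the part run at speed exactly $\frac{i}{\ell}$, of length $h'\times\frac{\ell}{i}$, and the part run at strictly higher speed, of integer length $a'$ — shows that the admissible block lengths are precisely those of the form $a'+h'\times\frac{\ell}{i}$ starting at release dates. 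The residual bookkeeping (that the ranges of $a',h',\beta$ in the minimization are wide enough, and that each recursive call strictly decreases $u$ by $\beta\ge1$ so the recursion terminates at a boundary case) is routine.
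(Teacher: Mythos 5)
Your overall architecture coincides with the paper's: the boundary cases are dispatched the same way, and your lower-bound direction (peel off the leftmost maximal block, bound it by $G_{k-1}(x',y',\beta)$ using the fact that its jobs are released and completed inside $[x',y']$, observe that the later blocks live in $[y',y]$ with the residual budget $(a-a')+(h-h')\frac{\ell}{i}$, and bound them by the smaller $F$-instance) is exactly the paper's argument. Your upper-bound direction also uses the paper's construction: concatenate a schedule realizing $G_{k-1}(x',y',\beta)$ on $[x',y']$ with one realizing $F_{k-1}(y',y,u-\beta,\ell,i,a-a',h-h')$ on $[y',y]$.

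The gap is in the step you yourself single out as the crux. You propose to certify admissibility of the concatenation by showing that the optimal $G$-schedule $\sigma_1$ occupies $[x',y']$ as a single maximal continuous block starting at the release date $x'$ and of length exactly $a'+h'\frac{\ell}{i}$, invoking Theorem~\ref{theorem:OptimalPropertiesYao}, Proposition~\ref{prop_phi}, and the block-partition argument from Proposition~\ref{prop_G}. This claim is false in general. First, a minimum-energy schedule of $\beta$ jobs inside $[x',y']$ may contain idle time and split into several blocks (e.g.\ one job with a tight early window and another job released much later). Second, the partition argument of Proposition~\ref{prop_G} does not transfer: there it is applied to the restriction of a globally optimal schedule that contains $J_k$, so property 4 of Theorem~\ref{theorem:OptimalPropertiesYao} forces every other speed to be at least $\frac{i}{\ell}$ and property 3 forces speed changes at integer dates, which is precisely what yields block lengths of the form $a''+h''\frac{\ell}{i}$. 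A standalone $\sigma_1$ contains no job running at speed $\frac{i}{\ell}$, its speeds are the YDS speeds of its own sub-instance, and its right endpoint $y'=x'+a'+h'\frac{\ell}{i}$ is in general not an integer, so neither the speed dichotomy nor the integer-breakpoint argument applies; for instance with $\frac{\ell}{i}=\frac{2}{3}$, $x'=0$, $y'=\frac{4}{3}$ and a job released at time $1$ with capped deadline $y'$, $\sigma_1$ contains a block of length $\frac{1}{3}$, which is not of the form $a''+h''\cdot\frac{2}{3}$. Hence the concatenated schedule need not belong to the class over which $F$ is defined, and your inequality $F_{k-1}(x,y,u,\ell,i,a,h)\le R$ is not established. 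To be fair, the paper's own proof has the same weakness --- it simply asserts that the concatenation ``is a feasible schedule'' without checking the block-structure constraint --- so your instinct about where the difficulty lies is sound, but the resolution you sketch does not close it.
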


\begin{prop}\label{prop_complexity_preemp}
The preemptive case problem can be solved in $O(n^6L^9P^9)$ time and 
in $O(nL^6P^6)$ space.
\end{prop}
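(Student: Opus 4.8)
The plan is to establish both bounds by a direct accounting of the two dynamic-programming tables $F$ and $G$ of Propositions~\ref{prop_F} and~\ref{prop_G}. For each table I would count separately (i) the number of distinct cells it contains and (ii) the amount of work the corresponding recurrence spends at a single cell; multiplying these two quantities and summing over the two tables bounds the running time. The optimum is then read off in $O(n)$ additional time as $\max\{u\mid G_n(0,d_{max},u)\le E\}$, so the whole cost is that of filling the tables.

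The first ingredient is a polynomial bound on the index set $\Phi$. By Definition~\ref{def:Phi} each of its elements is generated by four independent integer parameters, $s\in\{0,\dots,L\}$, $\ell\in\{1,\dots,L\}$, $i\in\{1,\dots,P\}$ and $h\in\{0,\dots,i\}$; counting the generating tuples gives $|\Phi|=O(L^2P^2)$. By Proposition~\ref{prop_phi} every start and finish time produced by the algorithm lies in $\Phi$, so all the spatial indices of both tables range over this set.

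I would then count the tables. A cell $F_{k-1}(x,y,u,\ell,i,a,h)$ is fixed by the indices $k$ and $u$ (each $O(n)$), the points $x,y\in\Phi$, and the parameters $\ell,a$ (each $O(L)$) and $i,h$ (each $O(P)$), giving $O(n^2\,|\Phi|^2\,L^2P^2)=O(n^2L^6P^6)$ cells in total and $O(nL^6P^6)$ cells in a single layer of fixed $k$. Its recurrence minimises over $a'=O(L)$, $h'=O(P)$, a release date $x'=r_j$ with $O(n)$ choices, and $\beta=O(n)$, while $y'$ is forced by the equality $y'=x'+a'+h'\ell/i$; hence each cell costs $O(n^2LP)$. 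The table $G$ has $O(n^2\,|\Phi|^2)=O(n^2L^4P^4)$ cells, and its recurrence minimises over $x\in\Phi$, the cardinalities $u_1,u_2=O(n)$, and the four parameters $a,\ell,i,h$ contributing $O(L^2P^2)$ combinations, with $y$ again forced by $y-x=a+(p_k+h)\ell/i$, for a per-cell cost of $O(n^2L^4P^4)$. Multiplying cells by per-cell cost gives $O(n^4L^7P^7)$ for the $F$-phase and $O(n^4L^8P^8)$ for the $G$-phase, both comfortably within the stated $O(n^6L^9P^9)$.

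Finally, for the space bound I would note that $G_k$ and $F_{k-1}$ only reference layers $k$ and $k-1$, so it suffices to retain two consecutive layers; since one $F$-layer already has $O(nL^6P^6)$ cells and dominates a $G$-layer, the space is $O(nL^6P^6)$. The delicate part of the argument is not the multiplication but the two facts underpinning it: the bound $|\Phi|=O(L^2P^2)$, which depends on every time point of the optimal schedule being one of the four-parameter values of Definition~\ref{def:Phi} (Proposition~\ref{prop_phi}); and the verification that in each of the two minimisations the only genuinely free indices are those listed above, the quantities $y$ and $y'$ being determined by the length equalities — otherwise a spurious extra factor of $|\Phi|$ would creep into the per-cell cost and inflate the bound.
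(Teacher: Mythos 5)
Your proof is correct, and it rests on the same two pillars as the paper's own argument: the bound $|\Phi|=O(L^2P^2)$ and the multiplication of table sizes by per-cell enumeration costs, where in both minimizations $y$ (resp.\ $y'$) is forced by the length equality so that no spurious factor of $|\Phi|$ enters. You differ from the paper in one organizational point, and it is exactly why your exponents come out smaller. The paper does \emph{not} pre-tabulate $F$ as a separate phase: it evaluates $F_{k-1}(x,y,u_2,\ell,i,a,h)$ on demand inside the minimization for $G_k$ (memoizing the values in a per-layer array of size $O(n|\Phi|^2L^2P^2)=O(nL^6P^6)$), and charges the single-application cost $r(F)=O(n^2LP)$ for every one of the $O(|\Phi|\,n^2L^2P^2)$ terms enumerated at each of the $O(|\Phi|^2n^2)$ cells of $G$; multiplying these three quantities is precisely what produces the figure $O(n^6L^9P^9)$. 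Your version instead fills each $F$-layer completely before it is used --- which is legitimate, since $F_{k-1}$ only references $G_{k-1}$ and $F_{k-1}$-cells with strictly smaller $u$, so one can iterate in increasing $u$ --- and then treats every $F$-reference inside $G$'s recurrence as an $O(1)$ lookup, giving $O(n^4L^7P^7)+O(n^4L^8P^8)$, a strictly sharper bound that of course still implies the stated one. The space analyses coincide: two consecutive layers suffice, and the $F$-layer of size $O(nL^6P^6)$ dominates. If anything, your accounting is the cleaner of the two, since the paper's per-reference charge for $F$ is only sound because the $F$-values are memoized within the layer; your explicit phase separation makes that point transparent.
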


The dynamic program can be adapted for the 
weighted version and 
has a running time of $O(n^2W^4L^9P^9)$
where $W$ is the sum of the weight of all jobs.

\section{The non-preemptive case}


\begin{definition}\label{def:Eksxtu}
For $1\leq u\leq |J(k,s,t)|$, we define $E_k(s,x,t,u)$ as the minimum energy consumption 
of an $S$-schedule such that $|S|=u$, $S\subseteq J(k,s,t)$ where the jobs scheduled 
cannot be preempted such that all jobs in 
$S$ are executed entirely in $[s,t]$
and such that 
the processor is idle in $[s,x]$

If such a schedule does not exist, i.e. when $u>|J(k)|$, 
then $E_k(s,x,t,u)=+\infty$.
\end{definition}

Given a budget $E$, the function objective is
$\max\{u~|~E_n({0,0,d_{max},u}) \le E \}$.

\begin{definition}\label{def:Theta_s_t}
Let $\Theta_{s,t} =\{s+ h\times \dfrac{t-s}{i}~|~ i = 1,\ldots ,n \mbox{ and } 
 h = 0,\ldots ,i
 \}$  and $\Theta=\bigcup_{s,t}\{\Theta_{s,t}~|~s,t\in\Omega\}$.
\end{definition}

\begin{prop}\label{theta}
Let $\mathcal{O}$ be any optimal schedule, then for each job in $\mathcal{O}$
its starting time and completion time belong to the set $\Theta$.
\end{prop}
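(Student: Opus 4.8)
The plan is to mimic the proof of Proposition~\ref{prop_phi}, but to argue directly about the structure of an arbitrary optimal non-preemptive schedule rather than through a constructive algorithm. Throughout I use the fact that the jobs are equal-length, say $p_j=p$ (this is the setting in which $\Theta$ is defined), and that $\mathcal{O}$ is an {\sc edf} schedule of some fixed subset $S$ of the jobs executed with minimum energy.

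First I would establish two local optimality properties. Since $\alpha>1$, the map $v\mapsto v^{\alpha}$ is strictly convex; hence for a job occupying a fixed continuous interval, running at a constant speed is the unique energy minimizer, so every job of $\mathcal{O}$ runs at a constant speed. I then decompose the busy part of $\mathcal{O}$ into maximal \emph{blocks}, meaning maximal time intervals of continuous execution during which the speed is constant. Because each job runs at a constant speed, a block endpoint can only occur at the boundary between two consecutive jobs (a speed change) or at a transition between idle and busy time.

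The core of the argument, and the step I expect to be the main obstacle, is to show that every block endpoint lies in $\Omega$. I would prove this by a purely local exchange argument. Consider a speed change between a job $J_a$ ending at $\mu$ with speed $v_1$ and the next job $J_b$ starting at $\mu$ with speed $v_2\neq v_1$, keeping the start of $J_a$ and the end of $J_b$ fixed. The two durations sum to a constant and $v\mapsto v^{\alpha-1}$ is strictly convex, so moving $\mu$ in the direction that equalizes the two speeds strictly lowers the energy; optimality therefore forces the beneficial move to be blocked, which can only happen if $\mu=d_a$ (when $v_1>v_2$) or $\mu=r_b$ (when $v_1<v_2$), so $\mu\in\Omega$ in both cases. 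An analogous shifting argument handles idle/busy transitions: if the processor starts a job $J_b$ at $\sigma$ with idle time immediately before, we could start $J_b$ earlier and slow it down, so optimality forces $\sigma=r_b\in\Omega$; symmetrically, a busy-to-idle transition at $\tau$ forces $\tau=d_a\in\Omega$ for the last job $J_a$ of the block. The only care needed is that these exchanges never violate another job's feasibility, which holds because each exchange modifies the duration of only the two jobs adjacent to the considered point.

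Finally I would translate the block structure into membership in $\Theta$. Fix a block $[\sigma,\tau]$ with $\sigma,\tau\in\Omega$; since all transition points occur at job boundaries, the block contains a whole number $m$ of jobs, all run at the same speed $v$, and since each has work $p$ and the processor is busy throughout, $\tau-\sigma=m\,p/v$, so each job occupies time $p/v=(\tau-\sigma)/m$. Hence the start and completion times of the jobs in this block are exactly $\sigma+h\,\frac{\tau-\sigma}{m}$ for $h=0,\ldots,m$. As $m\le n$, these points belong to $\Theta_{\sigma,\tau}$, and since $\sigma,\tau\in\Omega$ we have $\Theta_{\sigma,\tau}\subseteq\Theta$. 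Every job of $\mathcal{O}$ lies in exactly one block, so all starting and completion times belong to $\Theta$, which proves the proposition.
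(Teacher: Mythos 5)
Your proof is correct and follows essentially the same route as the paper's: decompose the optimal schedule into maximal constant-speed busy intervals, pin each interval endpoint to a release date or deadline (hence to $\Omega$) via shifting and speed-equalization exchange arguments, and then use $p_j=p$ to place all job boundaries inside such an interval at the equally-spaced points of $\Theta_{\sigma,\tau}$. The only differences are cosmetic — you flatten the paper's two-level block/sub-block decomposition into a single decomposition and make the constant-speed-per-job and convexity facts explicit, which the paper leaves implicit.
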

\begin{proof}
This optimal schedule $\mathcal{O}$ can be decomposed into successive block
of tasks, $B_1, \ldots, B_i, \ldots$, such that there is no idle time inside 
a block, and such that there is idle time immediately before and after each
block. Let us consider any such block $B$, and let $J_j$ (resp. $J_{j'}$) the
first (resp. last) task executed in $B$. The starting time of $J_{j}$ is 
necessarily $r_j$, otherwise the schedule would not be optimal since it would
be possible to decrease the energy consumption by starting the task $J_j$ 
earlier. A similar argument shows that the completion time of task $J_{j'}$ is
necessarily $d_{j'}$. Now the block $B$ can be decomposed into a set 
$b_1,\ldots , b_i, \ldots$ of maximal sub-blocks of consecutive jobs such 
that all the jobs executed inside a sub-block $b_i$ are scheduled with
the same common speed $s_i$. Let $J_j$ and $J_{j'}$ be two consecutive jobs such
that $J_j$ and $J_{j'}$ belong to two consecutive sub-blocks, let's say $b_i$ and 
$b_{i+1}$. Then either $s_i > s_{i+1}$ or $s_i < s_{i+1}$.
In the first case, the completion time of $J_j$ (which is also the starting time
of $J_{j'}$) is necessarily $d_j$, otherwise we could obtain a better schedule
by decreasing (resp. increasing) the speed of task $J_j$ (resp. $J_{j'}$).
For the second case, a similar argument shows that the completion time of $J_j$
is necessarily $r_{j'}$. Let us consider now a sub-block. The previous 
arguments have shown that the starting time $s$ and completion time $t$ of 
that sub-block belong to the set $\Omega$.
Since $p_j=p$~$\forall j$, all the jobs scheduled inside this sub-block 
have their starting time and completion time that belong to $\Theta_{s,t}$.
\end{proof}


\begin{prop}\label{prop_E}
One has
\[  E_k(s,x,t,u)=\min \left\{
\begin{aligned}
& E_{k-1}(s,x,t,u)\\
& \min_{
\substack{
s'\in \Theta;~x'\in \Theta\\
s \le r_k \le  s' < x' \le t\\
x'\le d_k\\
0\le \ell \le u-1}
}
\left\{
 E_{k-1}(s,x,s',\ell)  +
 E_{k-1}(s',x',t,u-\ell-1) + \dfrac{p^\alpha}{(x'-s')^{\alpha-1}}  
\right\}
\end{aligned}
\right.
\]

$E_0(s,x,t,0)=0$

$E_0(s,x,t,u)=+ \infty~\forall u>0$ \\

\end{prop}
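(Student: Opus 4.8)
The plan is to establish Proposition~\ref{prop_E} by the standard two-inequality argument used for dynamic-programming recurrences, exactly mirroring the structure of the proof of Proposition~\ref{prop_G}. Let $E'$ denote the right-hand side of the recurrence, with $E'_1$ the first line ($E_{k-1}(s,x,t,u)$) and $E'_2$ the minimization on the second line. First I would prove $E_k(s,x,t,u)\le E'$, the easy direction, by exhibiting a feasible $S$-schedule matching each candidate on the right. For $E'_1$, since $J(k-1,s,t)\subseteq J(k,s,t)$, any schedule achieving $E_{k-1}(s,x,t,u)$ is already a valid schedule for the index-$k$ problem, giving $E_k(s,x,t,u)\le E'_1$. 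For $E'_2$, I would take schedules $\mathcal{S}_1$ realizing $E_{k-1}(s,x,s',\ell)$ and $\mathcal{S}_3$ realizing $E_{k-1}(s',x',t,u-\ell-1)$, and concatenate them with $J_k$ run non-preemptively at constant speed $\frac{p}{x'-s'}$ throughout $[s',x']$. The processor is idle on $[s,x]$ by the definition of $E_{k-1}(s,x,s',\ell)$, the job sets $J(k-1,s,s')$ and $J(k-1,s',t)$ are disjoint, and $J_k$ fits in $[s',x']\subseteq[r_k,d_k]$ because $s\le r_k\le s'$ and $x'\le d_k$; its energy is $(x'-s')\cdot\bigl(\frac{p}{x'-s'}\bigr)^{\alpha}=\frac{p^{\alpha}}{(x'-s')^{\alpha-1}}$, so the total cost is exactly the summand, whence $E_k(s,x,t,u)\le E'_2$.

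For the reverse inequality $E'\le E_k(s,x,t,u)$, I would start from an optimal $S$-schedule $\mathcal{X}$ realizing $E_k(s,x,t,u)$. If $J_k\notin\mathcal{X}$, then $\mathcal{X}$ only uses jobs of $J(k-1,s,t)$ and is idle on $[s,x]$, so it witnesses $E_{k-1}(s,x,t,u)=E'_1\le$ the optimum; this handles the first branch. Otherwise $J_k\in\mathcal{X}$, and here I would invoke Proposition~\ref{theta}: in an optimal non-preemptive schedule every job's start and completion time lies in $\Theta$, so $J_k$ occupies a single interval $[s',x']$ with $s',x'\in\Theta$, run at the constant speed forced by $p_k=p$, namely $\frac{p}{x'-s'}$. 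Because jobs are indexed in {\sc edf} order and $J_k$ has the largest deadline among $J_1,\dots,J_k$, no job of $J(k-1,\cdot)$ scheduled after $x'$ can be pulled before it, and I would argue (as in the proof of P3 above, and using the block decomposition of Proposition~\ref{theta}) that the jobs of $\mathcal{X}$ other than $J_k$ split cleanly at $s'$ and $x'$: those released and completed within $[s,s']$ form a subschedule $\mathcal{S}_1\subseteq J(k-1,s,s')$ with the processor idle on $[s,x]$, and the remainder form $\mathcal{S}_3\subseteq J(k-1,s',t)$. Setting $\ell=|\mathcal{S}_1|$, these respect $s\le r_k\le s'<x'\le t$, $x'\le d_k$, and $0\le\ell\le u-1$.

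The costs then decompose additively: the energy of $\mathcal{S}_1$ is at least $E_{k-1}(s,x,s',\ell)$, the energy of $\mathcal{S}_3$ is at least $E_{k-1}(s',x',t,u-\ell-1)$ (both by definition of $E_{k-1}$, since $\mathcal{S}_1,\mathcal{S}_3$ are feasible witnesses with the right idle conditions and cardinalities), and $J_k$ contributes exactly $\frac{p^{\alpha}}{(x'-s')^{\alpha-1}}$. Summing and minimizing over the admissible $s',x',\ell$ gives $E_k(s,x,t,u)\ge E'_2\ge E'$, completing the proof. Finally the base cases are immediate: the empty schedule has zero energy so $E_0(s,x,t,0)=0$, while scheduling a positive number of jobs drawn from $J(0,s,t)=\emptyset$ is impossible, so $E_0(s,x,t,u)=+\infty$ for $u>0$.

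I expect the main obstacle to be the reverse direction, specifically justifying the clean separation of $\mathcal{X}\setminus\{J_k\}$ into the two disjoint job sets $J(k-1,s,s')$ and $J(k-1,s',t)$ and verifying that the processor genuinely remains idle on $[s,x]$ in each piece. Unlike the preemptive case, here $J_k$ runs as one contiguous block, so I must rule out any job of smaller deadline straddling the boundary points $s'$ or $x'$; the {\sc edf} ordering together with the fact that $J_k$ has the largest deadline among the first $k$ jobs is what forbids a straddling job from being split across $J_k$, but I would need to spell out that an optimal schedule would otherwise be improvable (contradicting optimality) to make the separation rigorous.
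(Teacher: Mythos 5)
Your easy direction ($E_k \le E'$) and your base cases match the paper and are fine. The hard direction, however, has a genuine gap, and it is not the one you flag. You worry about jobs \emph{straddling} the boundaries $s'$ and $x'$, but straddling is the trivial part: since $J_k$ occupies $[s',x']$ and preemption is forbidden, no other job can be split across that interval, period. The real difficulty is one of \emph{release-date membership}: to charge the jobs scheduled after $x'$ against $E_{k-1}(s',x',t,u-\ell-1)$, they must belong to $J(k-1,s',t)$, i.e.\ have release dates at least $s'$. In an arbitrary optimal schedule this can simply be false --- a job $J_j$ with $j<k$ and $r_j < s'$ may perfectly well be scheduled after $J_k$ --- and then your set $\mathcal{S}_3$ is \emph{not} a subset of $J(k-1,s',t)$ and the decomposition collapses. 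Your appeal to ``{\sc edf} order'' cannot rescue this: an optimal \emph{non-preemptive} schedule need not be {\sc edf} (the blanket {\sc edf} assumption in the preliminaries is used for the preemptive analysis), and in any case {\sc edf} is a statement about which job is chosen when, not about where release dates fall relative to $s'$.

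The paper closes this gap with two ingredients you never use. First, an extremal choice: among all schedules realizing $E_k(s,x,t,u)$, take $\mathcal{X}$ in which the start time $s'$ of $J_k$ is \emph{maximal}. Second, a swap argument that crucially exploits the hypothesis $p_j = p$ for all $j$: if some job $J_j$ scheduled after $J_k$ were available at time $s'$ (i.e.\ $r_j \le s'$), exchange the two jobs in their slots, each inheriting the other's speed. Feasibility is preserved ($r_j \le s'$, $d_j \ge$ its old completion time $> x'$, and $J_k$ fits in $J_j$'s old slot because $r_k \le s'$ and $d_k \ge d_j$), and because the works are equal the energy is unchanged --- so one obtains another optimal schedule in which $J_k$ starts strictly later, contradicting maximality of $s'$. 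This is precisely what forces every job scheduled after $J_k$ to have release date strictly greater than $s'$, hence $\mathcal{S}_3 \subseteq J(k-1,s',t)$. Note that equal processing times enter your write-up only in computing $J_k$'s energy; in the paper's proof they are what makes the swap energy-neutral, and without that your reverse inequality does not go through.
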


\begin{figure}[H]
\centering
\scalebox{1} 
{
\begin{pspicture}(0,-2)(13.22,1.1)
\psline[linewidth=0.04cm,arrowsize=0.05291667cm 2.0,arrowlength=1.4,arrowinset=0.4]{->}(0.0,-0.32437494)(13.2,-0.32437494)
\usefont{T1}{ppl}{m}{n}
\rput(0.43375,-0.7543749){$s$}
\usefont{T1}{ppl}{m}{n}
\rput(2.79375,-0.83437496){$x$}
\usefont{T1}{ppl}{m}{n}
\rput(12.38375,-0.83437496){$t$}
\usefont{T1}{ppl}{m}{n}
\usefont{T1}{ppl}{m}{n}
\psline[linewidth=0.04cm](2.84,-0.10725305)(2.84,-0.52241355)
\usefont{T1}{ppl}{m}{n}
\rput(5.8337502,-0.8143749){$s'$}
\usefont{T1}{ppl}{m}{n}
\rput(8.233749,-0.83437496){$x'$}
\psline[linewidth=0.04cm](8.24,-0.10725305)(8.24,-0.52241355)
\psframe[linewidth=0.04,dimen=outer,fillstyle=vlines,hatchwidth=0.04,hatchangle=30.0,hatchsep=0.20120001](8.24,0.27906257)(5.84,-0.32093745)
\psframe[linewidth=0.04,dimen=outer,fillstyle=vlines,hatchwidth=0.04,hatchangle=30.0,hatchsep=0.20120001](2.84,0.27906257)(0.44,-0.32093745)
\psline[linewidth=0.04cm](0.44,-0.10725305)(0.44,-0.52241355)
\psline[linewidth=0.04cm](5.84,-0.10725305)(5.84,-0.52241355)
\psline[linewidth=0.04cm](12.44,-0.10725305)(12.44,-0.52241355)
\usefont{T1}{ptm}{m}{n}

\rput(9.5314064,1.2){$J_k$}
\psline[linewidth=0.04cm,arrowsize=0.05291667cm 2.0,arrowlength=1.4,arrowinset=0.4]{->}(9.1,1.1)(7.1,0.3690625)

\psbrace[rot=90,ref=C,nodesepB=4pt](0.4,-1)(5.83,-1){$E_{k-1}(s,x,s',\ell) $}
\psbrace[rot=90,ref=C,nodesepB=4pt](5.84,-1)(12.44,-1){$E_{k-1}(s',x',t,u-\ell-1)$}

\end{pspicture} 
}

\caption{Illustration of Proposition~\ref{prop_E}}
\end{figure}
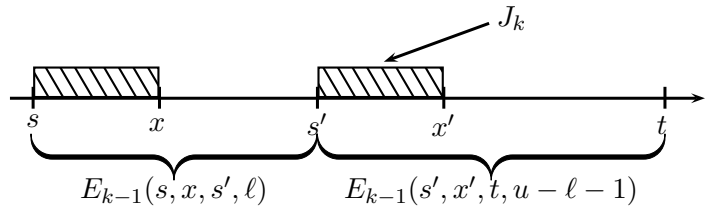

\begin{proof}

Let $E'$ be the right hand side. We distinguish by $E'_1$ for the first line
of $E'$ and $E'_2$ for the second line.\\

\textbf{We first prove that $E_k(s,x,t,u)\le E'$.}

Since $J(k-1,s,t)\subseteq J(k,s,t)$,
then $ E_{k}(s,t,u,f)\le E_{k-1}(s,t,u,f)= E'_1$.

Now consider a schedule $\mathcal{S}_1$ 
that realizes $E_{k-1}(s,x,s',\ell)$ and $\mathcal{S}_2$ 
that realizes $E_{k-1}(s',x',t,u-\ell-1)$.
We build a schedule with $\mathcal{S}_1$ from $s$ to $s'$ and with 
$\mathcal{S}_2$ from $s'$ to $t$. Job $J_k$ is scheduled within $\mathcal{S}_2$
during $[s',x']$.
Obviously, the subsets $J(k-1,s,s')$ and $J(k-1,s',t)$ do not intersect, so
this is a feasible schedule which cost is exactly $E'_2$,
thus $E_k(s,x,t,u)= E'_2 \le E'$.\\

\textbf{We now prove that $E'\le E_k(s,x,t,u)$.}

If $J_k \notin \mathcal{O}$ such that $\mathcal{O}$ realizes $E_k(s,x,t,u)$,
then $E'\le E'_1= E_k(s,x,t,u)$.

Now, let consider the case where $s'\in\Theta$
and some value $\ell$.
The two subsets of jobs do not intersect.
We denote $\mathcal{X}$ the schedule that realizes $E_k(s,x,t,u)$ in which
the starting time of $J_k$ is maximal, i.e. $s'$ is maximal.
We split into two sub-schedule $\mathcal{S}_1\subseteq \mathcal{X}$ and $\mathcal{S}_2\subseteq \mathcal{X}$.
We claim that all the jobs of $\mathcal{S}_1$ are released in $[s,s']$ 
and are completed before $s'$.
If it is not the case, 
it means that there exists a job which has been preempted and it is not a feasible solution.
We claim that on $\mathcal{X}$, the 
jobs executed after $J_k$ are not available when it starts (i.e. their release date
is strictly greater than the start time of $J_k$). Indeed,
suppose that there is a job $J_j$ that starts after $J_k$ and that is available 
at the start time of $J_k$. 
Then we can swap the two jobs in the schedule
such that
jobs swap also their speed, i.e.
the length are preserved (see Figure~\ref{fig_swap_E}).

\begin{figure}[H]
\label{fig_swap_E}
\begin{center}
\scalebox{0.8} 
{
\begin{pspicture}(0,-2.2)(17.800938,2.3176563)
\psline[linewidth=0.04cm,arrowsize=0.05291667cm 2.0,arrowlength=1.4,arrowinset=0.4]{->}(0.5209375,1.1042187)(17.780937,1.1176562)
\usefont{T1}{ppl}{m}{n}
\rput(0.8946875,0.27421877){$s$}
\usefont{T1}{ppl}{m}{n}
\rput(3.3146875,0.59421873){$x$}
\usefont{T1}{ppl}{m}{n}
\rput(15.904687,0.59421873){$t$}
\psline[linewidth=0.04cm](3.3609376,1.3213407)(3.3609376,0.90618014)
\usefont{T1}{ppl}{m}{n}
\rput(6.274688,0.2542188){$s'$}
\psline[linewidth=0.04cm](6.3809376,1.1176562)(6.3609376,0.5076563)
\usefont{T1}{ppl}{m}{n}
\rput(8.754687,0.59421873){$x'$}
\psline[linewidth=0.04cm](8.760938,1.3213407)(8.760938,0.90618014)
\psframe[linewidth=0.04,dimen=outer](8.760938,1.7076563)(6.3609376,1.1076562)
\psline[linewidth=0.04cm](0.9609375,1.3213407)(0.9609375,0.90618014)
\psline[linewidth=0.04cm](15.9609375,1.3213407)(15.9609375,0.90618014)
\psframe[linewidth=0.04,dimen=outer](13.580937,2.3176563)(12.380938,1.1176562)
\usefont{T1}{ptm}{m}{n}
\rput(7.512344,1.3876562){$J_k$}
\usefont{T1}{ptm}{m}{n}
\rput(12.912344,1.7876562){$J_j$}
\usefont{T1}{ptm}{m}{n}
\rput(0.60234374,1.4476562){$\sigma$}
\psline[linewidth=0.04cm,arrowsize=0.05291667cm 2.0,arrowlength=1.4,arrowinset=0.4]{->}(0.5209375,-1.2957813)(17.780937,-1.2823437)
\usefont{T1}{ppl}{m}{n}
\rput(0.8946875,-2.1257813){$s$}
\usefont{T1}{ppl}{m}{n}
\rput(3.3146875,-1.8057812){$x$}
\usefont{T1}{ppl}{m}{n}
\rput(15.904687,-1.8057812){$t$}
\psline[linewidth=0.04cm](3.3609376,-1.0786593)(3.3609376,-1.4938198)
\usefont{T1}{ppl}{m}{n}
\rput(6.274688,-2.1457813){$s'$}
\psline[linewidth=0.04cm](6.3809376,-1.2823437)(6.3609376,-1.8923438)
\usefont{T1}{ppl}{m}{n}
\rput(8.754687,-1.8057812){$x'$}
\psline[linewidth=0.04cm](8.760938,-1.0786593)(8.760938,-1.4938198)
\psframe[linewidth=0.04,dimen=outer](8.760938,-0.6923437)(6.3609376,-1.2923437)
\psline[linewidth=0.04cm](0.9609375,-1.0786593)(0.9609375,-1.4938198)
\psline[linewidth=0.04cm](15.9609375,-1.0786593)(15.9609375,-1.4938198)
\psframe[linewidth=0.04,dimen=outer](13.580937,-0.0823438)(12.380938,-1.2823437)
\usefont{T1}{ptm}{m}{n}
\rput(7.472344,-1.0123438){$J_j$}
\usefont{T1}{ptm}{m}{n}
\rput(12.952344,-0.6123438){$J_k$}
\usefont{T1}{ptm}{m}{n}
\rput(0.64234376,-0.9523438){$\sigma'$}
\end{pspicture} 
}

\end{center}
\caption{Illustration of the swap argument}
\end{figure}
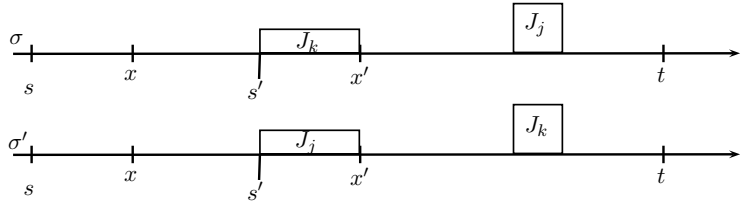

Since we have $p_j=p~\forall j$, then the 
energy consumption does not increase.
Indeed, the energy consumption of $J_j$ in the
new schedule is equal to the energy consumption
of $J_k$ in the previous schedule.
Thus we have a contradiction with the fact that $x'$ was maximal.\\
Hence the cost of the schedule 
$\mathcal{S}_1$ is greater than $E_{k-1}(s,x,s',\ell)$. Similarly,
the cost of the schedule $\mathcal{S}_2$ is greater than 
$ E_{k-1}(s',x',t,u-\ell-1) +\frac{p^\alpha}{(x'-s')^{\alpha-1}}$.\\
$ E_k(s,x,t,u)=  
  E_{k-1}(s,x,s',\ell)  +
 E_{k-1}(s',x',t,u-\ell-1) + \frac{p^\alpha}{(x'-s')^{\alpha-1}} \ge E'_2 \ge E'$
\end{proof}

\begin{definition}\label{def:Gamma_x}
Let $\Gamma(s)= \{ a+(h+1)\times\frac{b-a}{i}~|~s=a+h\times\frac{b-a}{i}
\in 
\Theta_{a,b};~a,b\in \Omega;~ i = 1,\ldots ,n;~h = 0,\ldots ,i \}$.
\end{definition}

\begin{prop}\label{prop_Gamma}
The set of value of $x$ can be reduced to $\Gamma(s)$ when $s$ is fixed
and has a size of $O(n^3)$.
\end{prop}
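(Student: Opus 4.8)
The plan is to establish two things: (i) \emph{correctness} --- that restricting the second argument $x$ to the set $\Gamma(s)$ discards no optimal schedule once $s$ is fixed; and (ii) the \emph{cardinality bound} $|\Gamma(s)|=O(n^3)$. I will treat them in this order.

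For correctness, I would read off the role of the interval $[s,x]$ from the recursion of Proposition~\ref{prop_E}. There the subproblem $E_{k-1}(s',x',t,\cdot)$ is created precisely so that the reserved ``idle'' interval $[s',x']$ holds the single job $J_k$, which starts at $s'$ and completes at $x'$. Hence in any state $E_k(s,x,t,u)$ the interval $[s,x]$ is the slot of exactly one job, whose start time is $s$ and whose completion time is $x$. By Proposition~\ref{theta} this job lies inside a maximal sub-block of consecutive equal-speed jobs whose endpoints $a,b$ belong to $\Omega$ and which is partitioned into $i\le n$ equal slots of length $\frac{b-a}{i}$; thus $s=a+h\frac{b-a}{i}$ for some $0\le h\le i$, i.e. $s\in\Theta_{a,b}$. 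Since all processing requirements equal $p$ and all jobs of the sub-block run at the common speed, consecutive job boundaries are spaced exactly $\frac{b-a}{i}$ apart, so the completion time is $x=a+(h+1)\frac{b-a}{i}$. By Definition~\ref{def:Gamma_x} this is exactly an element of $\Gamma(s)$, which proves that only values $x\in\Gamma(s)$ need be examined.

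For the cardinality bound, I would count the triples $(a,b,i)$ indexing $\Gamma(s)$. As $\Omega$ has at most $2n$ elements there are $O(n^2)$ choices of $(a,b)$ and $n$ choices of $i$. The key point is that, with $s$ fixed, the equation $s=a+h\frac{b-a}{i}$ determines $h=\frac{(s-a)i}{b-a}$ uniquely; a triple therefore contributes the single element $a+(h+1)\frac{b-a}{i}$ when this $h$ is an integer in $\{0,\dots,i\}$ and nothing otherwise. Hence $|\Gamma(s)|\le|\Omega|^2\cdot n=O(n^3)$.

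I expect the main obstacle to be the correctness step, and within it the claim that $[s,x]$ always corresponds to a \emph{single} job occupying one full slot of a sub-block whose grid has the form $\Theta_{a,b}$ with $a,b\in\Omega$. This is where the equal-processing-requirement hypothesis $p_j=p$ is essential: it is what makes the slot length uniform inside a sub-block and forces $x$ to be the immediate grid-successor of $s$ rather than an arbitrary point of $\Theta$. Once this structural fact is in place, the counting argument for the size bound is routine.
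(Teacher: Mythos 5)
Your proof is correct and follows essentially the same route as the paper's: interpret $[s,x]$ as the reserved slot of a single job, invoke the sub-block structure from Proposition~\ref{theta} to place $s$ and $x$ as consecutive grid points of some $\Theta_{a,b}$ with $a,b\in\Omega$, and bound $|\Gamma(s)|$ by counting triples $(a,b,i)$. The paper's own proof is just a terser version of this; your explicit observation that $h$ is determined by $(s,a,b,i)$, so each triple contributes at most one point, supplies the counting detail the paper leaves implicit.
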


\begin{prop}\label{prop_complexity_non_preempt}
The dynamic program has a running time of $O(|\Gamma|^2|\Theta|^3n^3)$
or $O(n^{21})$ and $O(n^{13})$ space.
\end{prop}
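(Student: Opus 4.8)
The plan is to read off both bounds directly from the recurrence of Proposition~\ref{prop_E}, by first sizing the two index sets $\Theta$ and $\Gamma$, then counting the number of distinct states $E_k(s,x,t,u)$, and finally counting the work needed to evaluate the minimum at a single state. The claimed bounds $O(|\Gamma|^2|\Theta|^3n^3)$ and the corresponding space bound are just the product of these two counts, and the explicit polynomial bounds follow by substituting the set sizes.

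First I would size the index sets. By Definition~\ref{def:Theta_s_t}, each grid $\Theta_{s,t}$ contains $\sum_{i=1}^{n}(i+1)=O(n^2)$ points, and $\Theta$ is the union of these grids over all pairs $(s,t)$ with $s,t\in\Omega$. Since $|\Omega|\le 2n$ there are $O(n^2)$ such pairs, so $|\Theta|=O(n^4)$. For the idle boundary, Proposition~\ref{prop_Gamma} lets me restrict the relevant values of $x$ to $\Gamma(s)$ once $s$ is fixed, with $|\Gamma|=O(n^3)$.

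Next I would count states. A state $E_k(s,x,t,u)$ is specified by $k\in\{1,\dots,n\}$, $u\in\{0,\dots,n\}$, the endpoints $s,t\in\Theta$, and the idle boundary $x$; crucially, with $s$ fixed I need only store $x\in\Gamma(s)$. This yields $O(n)\cdot|\Theta|\cdot|\Gamma|\cdot|\Theta|\cdot O(n)=O(n^2|\Theta|^2|\Gamma|)$ states, which is the space requirement, equal to $O(n^2\cdot n^8\cdot n^3)=O(n^{13})$. For the per-state work, the minimization of Proposition~\ref{prop_E} ranges over $s'\in\Theta$, over $x'$, and over $\ell\in\{0,\dots,u-1\}$ (at most $n$ values); applying Proposition~\ref{prop_Gamma} to the subproblem $E_{k-1}(s',x',t,\cdot)$, in which $s'$ plays the role of the fixed left endpoint, restricts $x'$ to $\Gamma(s')$, at most $|\Gamma|$ values. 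Hence a single state costs $O(|\Theta|\cdot|\Gamma|\cdot n)$, and the total running time is the number of states times this cost, namely $O(n^2|\Theta|^2|\Gamma|)\cdot O(|\Theta||\Gamma|n)=O(|\Gamma|^2|\Theta|^3n^3)=O(n^3\cdot n^{12}\cdot n^6)=O(n^{21})$.

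The main obstacle is legitimately invoking the reduction to $\Gamma$ for \emph{both} the stored boundary $x$ and the minimization variable $x'$: the factor $|\Gamma|^2$ (rather than $|\Theta|^2$) is exactly what keeps the exponent at $21$ and the space at $n^{13}$, so I would take care to verify that Proposition~\ref{prop_Gamma} applies verbatim to the subproblem on $[s',t]$, with $s'$ in place of $s$, just as it does to the original problem on $[s,t]$. Everything else is routine arithmetic on the factors contributed by $k$, $u$, $\ell$ (the three factors of $n$), by $s$, $t$, $s'$ (the three factors of $|\Theta|$), and by $x$, $x'$ (the two factors of $|\Gamma|$).
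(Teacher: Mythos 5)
Your proposal is correct and follows essentially the same route as the paper's own proof: count the states $E_k(s,x,t,u)$ (with $x$ restricted to $\Gamma(s)$) to get the $O(n^2|\Theta|^2|\Gamma|)=O(n^{13})$ space, charge $O(|\Theta|\cdot|\Gamma|\cdot n)$ per state for the minimization over $s'\in\Theta$, $x'\in\Gamma(s')$ and $\ell$, and multiply, using $|\Theta|=O(n^4)$ and $|\Gamma|=O(n^3)$ to get $O(n^{21})$. The point you flag about applying Proposition~\ref{prop_Gamma} to $x'$ on the subinterval $[s',t]$ is exactly the step the paper also relies on (it bounds the choices of $x'$ by $O(|\Gamma(s')|)$), so your treatment matches the intended argument.
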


The dynamic program can be adapted for the 
weighted version and 
has a running time of $O(|\Gamma|^2|\Theta|^3nW^2)$ or $O(n^{19}W^2)$
where $W$ is the sum of the weight of all jobs.

\begin{prop}\label{prop_np_hard}
The $1|r_j,p_j=p,E|\sum_j w_j U_j$ problem is $\mathcal{NP}$-hard.
\end{prop}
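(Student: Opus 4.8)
The plan is to prove $\mathcal{NP}$-hardness by a reduction from \textsc{Partition}, exploiting the fact that in the speed-scaling model the energy needed to run a fixed amount of work $p$ inside a window of length $T$ equals $p^{\alpha}/T^{\alpha-1}$, so the \emph{length} of a job's feasibility window plays the role of an item ``size''. Given a \textsc{Partition} instance $a_1,\dots,a_n$ with $\sum_j a_j=2S$, I would build $n$ jobs, all with the common processing requirement $p_j=p$, and place them in \emph{pairwise disjoint} time windows $[r_j,d_j)$ laid out back to back along the time axis. The weight of job $J_j$ is set to $w_j=a_j$ (kept as an exact integer), and the length $T_j=d_j-r_j$ is chosen so that the per-job energy $p^{\alpha}/T_j^{\alpha-1}$ is, up to a common factor, essentially $a_j$. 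The energy budget $E$ and the throughput threshold $K=S$ are then placed exactly at the ``half'' boundary.

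The correctness argument is short once the instance decomposes. Because the windows are disjoint, for any chosen subset $S'\subseteq J$ each selected job must be executed entirely inside its own window, and by convexity of the power function (equivalently, by property~1 of Theorem~\ref{theorem:OptimalPropertiesYao}) the cheapest way to do so is to run it at the constant speed $p/T_j$ over the whole window, at energy $p^{\alpha}/T_j^{\alpha-1}$; there is no interaction between windows, so the minimum energy of an $S'$-schedule is simply $\sum_{j\in S'}p^{\alpha}/T_j^{\alpha-1}$ and its weighted throughput is $\sum_{j\in S'}a_j$. Hence the question ``is there a non-preemptive schedule of weighted throughput at least $K=S$ and energy at most $E$?'' reduces to ``is there a subset of the $a_j$ of total weight at least $S$ whose associated energy does not exceed $E$?'', and with $E$ set at the boundary this holds precisely when some subset has total weight exactly $S$, i.e. when the \textsc{Partition} instance is a yes-instance.

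The one real obstacle is \emph{realizability}: the energies $p^{\alpha}/T_j^{\alpha-1}$ live in a discrete set fixed by the integer window lengths, so we cannot make them exactly proportional to arbitrary integers $a_j$. I would handle this by a scaling-and-rounding argument that keeps the weights exact and lets only the energies be perturbed. Choose $T_j$ to be the nearest integer to $\bigl(p^{\alpha}/(\lambda a_j)\bigr)^{1/(\alpha-1)}$ for a common scale $\lambda$; then each realized energy $c_j=p^{\alpha}/T_j^{\alpha-1}$ satisfies $|c_j-\lambda a_j|\le \epsilon\,\lambda a_j$ with relative error $\epsilon=O(\alpha/\min_j T_j)$. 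Taking $p$ (hence every $T_j$) large forces $\epsilon<1/(2S+1)$, which is smaller than the integrality gap separating a subset of weight $S$ from one of weight $S+1$. Setting $E=\lambda S(1+\epsilon)$ then guarantees that every subset of weight exactly $S$ stays within budget while every subset of weight $S+1$ or more exceeds it, so the equivalence above survives the perturbation; this control of the rounding error against the weight gap is the step I expect to require the most care.

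Finally I would verify that the transformation is genuinely polynomial: since $\epsilon<1/(2S+1)$ only requires $\min_j T_j$ of magnitude $\Theta(S)$, and $S$ is given in binary, all window lengths, release dates, deadlines and the value $p$ have $\mathrm{poly}(\text{input})$ bits, and the whole horizon $\sum_j T_j=O(nS)$ is likewise polynomially encoded. Combined with membership of the decision version in $\mathcal{NP}$ (guess the selected subset and check the additive energy and weight in polynomial time), this establishes that $1|r_j,p_j=p,E|\sum_j w_jU_j$ is $\mathcal{NP}$-hard, consistent with the weighted-case entries of Figure~\ref{tab_summary}.
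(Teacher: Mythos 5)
Your proposal is correct, but it takes a genuinely different route from the paper, and the difference matters. The paper reduces from \textsc{Knapsack}: item $i$ becomes a job with $p_i=1$ and weight $v_i$, confined to a private window of length $c_i$ (the windows are laid back to back), and the budget is $E=C$; no approximation is involved because the item size $c_i$ is used directly as the window length. Your reduction is from \textsc{Partition} and, crucially, encodes the numbers $a_j$ not in the window lengths themselves but in the induced energies $p^{\alpha}/T_j^{\alpha-1}$, inverting the energy function to choose $T_j$ and then controlling the rounding error $\epsilon$ against the unit gap between total weight $S$ and $S+1$. This extra work is not pedantry: under the equal-work constraint $p_j=p$, the minimum energy to finish a job inside a window of length $c$ is $p^{\alpha}/c^{\alpha-1}$, a \emph{decreasing} and nonlinear function of $c$, so window length is not a valid proxy for energy cost. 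Indeed, this is exactly where the paper's own argument breaks down: its converse direction asserts that feasibility of the constant-speed schedule forces $\sum_{j\in J'}(d_j-r_j)\le C$, but with $p_j=1$ the cheapest schedule of $J'$ costs only $\sum_{j\in J'}c_j^{1-\alpha}\le|J'|$, so taking all $c_j$ large makes every subset affordable no matter how large $\sum_j c_j$ is. Your inversion-plus-rounding scheme is precisely what repairs this, and your inequality $\epsilon<1/(2S+1)$ is the right threshold: a subset of weight at least $S+1$ costs at least $(1-\epsilon)\lambda(S+1)$, which exceeds $E=(1+\epsilon)\lambda S$ exactly when $\epsilon(2S+1)<1$. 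Two details to tighten in the write-up: the horizon is not $O(nS)$ but rather $O\bigl(nS^{1+1/(\alpha-1)}\bigr)$, since the window lengths span a factor of $(2S)^{1/(\alpha-1)}$ --- still of polynomial bit-size because $\alpha>1$ is a fixed constant, which is all you need; and you should fix $\lambda$, $\epsilon$ and $E$ as explicit rationals of polynomial bit-size and note that each $T_j$ (a nearest integer to a real root) is computable in polynomial time for fixed $\alpha$ --- the true energies $p^{\alpha}/T_j^{\alpha-1}$ may be irrational, but your argument only ever uses the two-sided bounds $(1\pm\epsilon)\lambda a_j$, so this is harmless.
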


\section{Conclusion}

In this paper, we prove that there is a pseudo-polynomial
time algorithm for solving the problem optimally. This result is a first (partial) answer to the complexity status of the throughput maximization problem in the offline setting. Our result shows that the problem is not strongly NP-hard, but the question of whether there is a polynomial time algorithm for it remains a challenging open question.

\bibliographystyle{unsrt}
\bibliography{biblio}

\newpage

\centerline{\bf \Huge Appendix}

\section*{Proof of Proposition~\ref{prop_F}}

\begin{proof}
Let $F'$ be the right hand side of the equation.

\textbf{We first prove that $F_{k-1}(x,y,u,\ell,i,a,h)\le F'$.}

Let us consider a schedule 
$\mathcal{S}_1$ that realizes $G_{k-1}(x',y',\beta)$ and
$\mathcal{S}_2$ that realizes $F_{k-1}(y',y,u-\beta,\ell,i,a',h')$.
We suppose that the processor is idle during $[x,x']$.
We build a schedule with a empty set from $x$ to $x'$, 
with $\mathcal{S}_1$ from $x'$ to $y'$ and with 
$\mathcal{S}_2$ from $y'$ to $y$.

Obviously, the subsets $J(k-1,x,z)$ and $J(k-1,z,y)$ do not intersect,
so this is a feasible schedule, and
its cost is $F'$,
thus $F_{k-1}(x,y,u,f,\ell,i)\le  F' $.\\

\textbf{We now prove that $F'\le F_{k-1}(x,y,u,\ell,i,a,h)$.}

Let $\mathcal{O}$ to be an optimal schedule that realizes 
$F_{k-1}(x,y,u,\ell,i,a,h)$ such that $x'$ is the first 
starting time of the schedule and $y'$ is the completion
time of the first block of jobs in $\mathcal{O}$.
We split into two sub-schedules $\mathcal{S}_1\subseteq J(k-1,x',y')$ and 
$\mathcal{S}_2\subseteq J(k-1,y',y)$ such that
the value
of $x'$ is maximal and $y'$ is also maximal.



%
%
%


Then $y'-x'=a'+h'\times\frac{\ell}{i}$ for some value $a'=0,\ldots,a$
and $h'=0,\ldots,h$ by definition. Thus we can assume
that jobs in $\mathcal{S}_2$ have to be scheduled during
at most $(a-a')+(h-h')\times\frac{\ell}{i}$ unit time in $[y',y]$.
We claim that $x'$ is a release date by definition.


Moreover, we claim that all the jobs of $\mathcal{S}_2$ are released in $[y',y]$ 
and are completed before $y$.
If there exists a job in $\mathcal{S}_2$ which is not completed
at time $t$, then the removal of 
this job would lead to a lower energy consumption schedule for
$\mathcal{S}_2$ which contradicts the definition of
$F_{k-1}(y',y,|\mathcal{S}_2|,\ell,i,a-a',h-h')$.\\


Then the restriction $\mathcal{S}_1$ of $\mathcal{O}$ in 
$[x',y']$ is a schedule that meets all constraints
related to $G_{k-1}(x',y',|\mathcal{S}_1|)$. Hence its cost is
greater than $G_{k-1}(x',y',|\mathcal{S}_1|)$.
Similarly, the restriction  $\mathcal{S}_2$ of $\mathcal{O}$ to $[y',y]$ is a schedule that
meets all constraints related to 
$F_{k-1}(y',y,|\mathcal{S}_2|,\ell,i,a-a',h-h')$.

Thus $F'\le F_{k-1}(x,y,u,\ell,i,a,h)$.
\end{proof}

\section*{Proof of Proposition~\ref{prop_complexity_preemp}}

\begin{proof}
The values of $G_k(s,t,u)$ are stored in a multi-dimensional array of
size $O(|\Phi|^2 n^2)$.
Each value need $O(|\Phi| n^2 L^2 P^2 r(F))$ time to be computed where
$r(F)$ is the running time of the table $F_{k-1}(x,y,u,\ell,i,a,h)$.
Since we fix every value of $x,y,u,\ell,i,a,h$ in the minimization
step, the table $F$ doesn't need to be precomputed.
Then the running time is $O(n^2LP)$ for each value of $F$.
Therefore, the total running time of the dynamic programming
is $O(n^6L^9P^9)$.
Moreover, the values of $F_{k-1}(x,y,u,\ell,i,a,h)$ are stored in a multi-dimensional array (since we don't need to remember the $F_i$ values for $i<k-1$) of
size $O(n|\Phi|^2 L^2P^2 )=O(nL^6P^6)$.
\end{proof}

\section*{Proof of Proposition~\ref{prop_Gamma}}
\begin{proof}
Since $[s,x]$ is a reserved place for a job in the dynamic program, 
it is not necessary to consider
every value for $x$. Indeed, a job starts and ends belong the same set $\Theta_{a,b}$
for some value of $a$ and $b$ according to the proof of Proposition~\ref{theta}.
Then if $s$ is fixed, then $x$ can only take a 
reduced set of value which has a size of $O(n^3)$.
\end{proof}

\section*{Proof of Proposition~\ref{prop_complexity_non_preempt}}

\begin{proof}
The values of $E_k(s,x,t,u)$ are stored in a multi-dimensional array of
size $O(|\Gamma| |\Theta|^2 n^2)$.
Each value need $O(|\Gamma| |\Theta| n)$ time to be computed thanks to Proposition~\ref{prop_E} because there are $O(|\Theta|)$ values for $s'$,
$O(|\Gamma(s')|)$ values for $x'$ and $O(n)$ values for $\ell$.
Thus we have a total running time of $O(|\Gamma|^2|\Theta|^3n^3)$.
Moreover, the set of values $\Theta$ has a size of $O(n^4)$ because
$0\le s,t,i,h \le n$, thanks to Proposition~\ref{prop_Gamma},
$\Gamma(s)$ has a size of $O(n^3)$ for some value $s$.
This leads to an overall time complexity $O(n^{21})$.
\end{proof}

\section*{Proof of Proposition~\ref{prop_np_hard}}
\begin{proof}

In order to establish the $\mathcal{NP}$-hardness of $1|p_j=p,E|\sum_j w_j U_j$, we present a reduction from the {\sc Knapsack} problem which is known to be $\mathcal{NP}$-hard.
In an instance of the {\sc Knapsack} problem we are given a set $I$ of $n$ items. Each item $i\in I$ has a value $v_i$ and a capacity $c_i$.
Moreover, we are given a capacity $C$, which is the capacity of the knapsack, and a value $V$.

Given an instance of the {\sc Knapsack} problem, we construct an instance of $1|p_j=p,E|\sum_j w_j U_j$ as follows.
For each item $i$, $1\leq i \leq n$, we introduce a job $J_i$ with 
$r_i=\sum_{j=1}^{i-1}c_j$, $d_i=\sum_{j=1}^{i}c_j$, $w_i=v_i$ and $p_i=1$.
Moreover, we set the budget of energy equal to $E=C$.

We claim that the instance of the {\sc Knapsack} problem is feasible if and only if there is a feasible schedule for $1|p_j=p,E|\sum_j w_j U_j$ of total weighted throughput not less than $V$.

Assume that the instance of the {\sc Knapsack} is feasible.
Therefore, there exists  a subset of items $I'$ such that $\sum_{i\in I'}v_i\geq V$ and $\sum_{i\in I'}c_i\leq C$.
Then we can schedule the jobs in $I'$ with constant speed equals to 1. Their total energy consumption of this schedule is no more that $C$ since the instance of the Knapsack is feasible.
Moreover, their total weight is no less than $V$. 

For the opposite direction of our claim, assume there is a feasible schedule for\\
$1|p_j=p,E|\sum_j w_j U_j$ of total weighted throughput not less than $V$.
Let $J'$ be the jobs which are completed on time in this schedule.
Clearly, due to the convexity of the speed-to-power function, the schedule that executes the jobs in $J'$ with constant speed is also feasible.
Since the latter schedule is feasible, we have that $\sum_{j\in J'}(d_j-r_j)\leq C$.
Moreover, $\sum_{j\in J'}w_j\geq V$.
Therefore, the items which correspond to the jobs in $J'$ form a feasible solution for the {\sc Knapsack}. 
\end{proof}

\end{document}